\documentclass[reprint,amsmath,amssymb,aps,pra,floatfix]{revtex4-2}
\usepackage{amsthm,booktabs,array}
\usepackage[hidelinks]{hyperref}
\newtheorem{theorem}{Theorem}
\newcommand{\R}{\mathbb R}\newcommand{\Z}{\mathbb Z}\newcommand{\C}{\mathbb C}
\newcommand{\tr}{\operatorname{tr}}
\newcommand{\norm}[1]{\left\lVert#1\right\rVert}
\usepackage{graphicx}
\graphicspath{{_work/}{./}}
\usepackage{dcolumn}
\usepackage{bm}

\begin{document}

\title{Approximation of a continuous-variable Gaussian state by discrete-variable quantum states}

\author{Nicolae Cotfas}
 \homepage{https://unibuc.ro/user/nicolae.cotfas/ \\
 Emails:  \ nicolae.cotfas@g.unibuc.ro \ ,\\ \ ncotfas@yahoo.com
 }
\affiliation{Faculty of Physics, University of Bucharest,\\
P.O. Box MG-11, 077125 Bucharest, Romania}%

\author{Haobo Ma}
\thanks{ORCID: 0009-0008-4944-977X}
\affiliation{ChronoAI Pte Ltd, Singapore}
\author{Wenlin Zhang}
\thanks{ORCID: 0009-0005-2414-7968}
\affiliation{National University of Singapore, Singapore}
\affiliation{ChronoAI Pte Ltd, Singapore}

%


\date{28 October 2026}

\begin{abstract}
For every centered one-mode continuous-variable Gaussian state, we prove that
the signed Wigner periodization defines a discrete-variable density operator
in every odd dimension $d\ge3$. Positivity is equivalent to the one-mode uncertainty
condition, and we determine the complete inertia outside and on its boundary.
A covariance-independent isometric embedding gives exponential convergence
in trace norm, with an explicit error bound uniform in Planck's constant.
For covariance twice the identity, dimension $31$ gives a certified trace-norm
error below $1/8000$.
The trace-norm estimate controls all measurement probabilities. We retain
the original canonical-coherent-frame formula in terms of the determinant
and trace of the covariance matrix, with continuous normalization fixed to
agree with the discrete frame convention. The coherence comparisons identify
which diagnostics accompany the controlled state approximation. Position- and
momentum-basis coherences have exponentially convergent values after division
by the square root of the dimension. Numerical comparisons of purity and
frame coherence illustrate the discrete-continuous transition.
\end{abstract}

\maketitle



\section{Introduction} The Gaussian quantum states play a fundamental role in quantum mechanics, quantum information, quantum computers and quantum optics \cite{Weedbrook12,Ferraro05,Wang07,Adesso14,Navarrete22}. 
An adequate description of their coherence is important in the case of certain applications \cite{Xu16,Tan17,Ahnefeld26}.
In a Hilbert space, an orthonormal basis is a set of orthogonal unit vectors satisfying the resolution of the identity.
The elements of the Hilbert space as well as the linear operators can be
easily described in terms of an orthonormal basis, and for each of them 
the description can be done in a single way. So, we have uniquely determined coordinates for each element or linear operator. 
Replacing orthonormality by a frame operator proportional to the identity
gives the more general concept of a tight frame
\cite{Christensen03,Waldron18,Ferrie08}. A tight frame satisfies
$\sum_j|\varphi_j\rangle\langle\varphi_j|=\kappa I$ for some $\kappa>0$.
Rescaling its vectors by $\kappa^{-1/2}$ gives a Parseval frame and the
associated resolution of the identity.
Any orthonormal basis is a tight frame, but a tight frame is a more general concept. For example, a tight frame can contain several orthogonal bases, identical or null vectors.\\
The representation of an element of the Hilbert space or of a linear operator in terms of a tight frame is not unique, but there exists 
a standard representation which can be easily obtained by using the 
resolution of the identity satisfied by the vectors of the frame.
This allows us to extend certain definitions from orthonormal bases to tight frames.\\
There exist several measures for the coherence of a quantum state 
\cite{Baumgratz14,Chen16,Hu17,Sperling18,Ma19,Radhakrishnan19,Bischof19,Mandal20,Cotfas25}  , but in this article, we  use only the extension of $\ell_1$-norm of coherence \cite{Baumgratz14,Chen16} to tight frames, we call {\it frame-dependent coherence}. 
Several examples concerning the finite-dimensional case are presented in \cite{Cotfas25}.
After normalization, a tight frame defines a rank-one POVM (positive operator-valued measure describing a generalized measurement). The frame-dependent functional is related to
the usual off-diagonal $\ell_1$ coherence and its POVM-based extension
\cite{Baumgratz14,Chen16,Bischof19,Cotfas25} by the explicit affine
normalization given in Sec.~\ref{sec-frame-functional}.  In this article,
we consider only single-mode, pure or mixed, continuous-variable Gaussian states (CVGS) \cite{Weedbrook12,Ferraro05,Wang07,Adesso14,Navarrete22} or
discrete-variable Gaussian states (DVGS) \cite{Cotfas23,Vourdas04}.\\
Based on a formula from \cite{Cotfas12}, we have obtained in  \cite{Cotfas23} the representation (\ref{Wgausskappa}) of the Wigner function of a pure DVGS ${\tilde{g}}_\kappa$ in terms of
the Wigner function of the corresponding CVGS ${g}_\kappa$. By extending (\ref{Wgausskappa}) from pure to mixed case,
we have associated a mixed DVGS  to any mixed CVGS  in any odd dimension $d=3,5,7,...$.
The earlier discrete-continuous correspondence in \cite{Cotfas23}
motivates the approximation studied here. Its Supplemental Material,
Theorems~5 and 6, gives the qualitative state and purity limits;
Supplemental Theorem~7 of its arXiv version states purity on the boundary
$D=1$. Our direct proof places that boundary statement within the complete
inertia classification.
We prove positivity for the exact
signed construction, specify a common Hilbert space through an explicit
isometric embedding, and establish exponential convergence in trace norm.
The embedding is independent of the covariance, while the error bound depends
on its position and momentum widths. Theorem~\ref{convergence} is the main
approximation result. The numerical data in Table~\ref{purityc} illustrate
purity and coherent-frame coherence; the basis comparisons are given in
Table~\ref{comparison}.

Finite Nyquist--Shannon representations of oscillator wavefunctions and
field/conjugate-field sampling error estimates were developed in
\cite{macridin18,macridin22}. They provide the sampling context for
Appendix~\ref{app-convergence}.
Continuous--discrete Wigner correspondences have also been developed through
Gottesman--Kitaev--Preskill encoding \cite{fengluo24}.
Exponential finite-rank approximation alone
also follows from the thermal spectrum: for $D=\det\sigma\ge1$ and
$r=(\sqrt D-1)/(\sqrt D+1)$, the normalized first-$d$ spectral truncation
has trace-norm error $2r^d$. That spectral subspace depends on $\sigma$.
Here the prescribed signed Wigner family has an exact physicality and
inertia theorem, and its reconstruction uses a map independent of $\sigma$
with an explicit mixed-state error bound.

\section{Continuous-variable Gaussian states}
\subsection{One-dimensional quantum systems} The mathematical description 
of these systems uses the Hilbert space
\begin{equation}
L^2(\mathbb{R})\!=\!\overline{\left\{ 
\Psi\!:\!\mathbb{R}\!\rightarrow\!\mathbb{C}\ \left| \ 
\int_{-\infty}^\infty |\Psi(q)|^2\, dq\!<\!\infty\right.\right\}}
\end{equation}
(closure of the space of square integrable functions) 
 with the inner product 
\begin{equation}
\left\langle \Phi,\Psi\right\rangle\!=\!
\int_{-\infty}^\infty \overline{\Phi(q)}\, \Psi(q)\, dq,
\end{equation}
Fourier transform \cite{Planck}
\begin{equation}
(F\Psi)(p)\!=\!\frac{1}{\sqrt{h}}\int\limits_{-\infty}^\infty \mathrm{e}^{-\frac{2\pi \mathrm{i}}{h}pq}\, \Psi(q)\, dq, 
\end{equation}
and displacement operators
\begin{equation} 
D(q,p)\Psi(x)=\mathrm{e}^{-\frac{\pi \mathrm{i}}{h}pq}\ 
\mathrm{e}^{\frac{2\pi \mathrm{i}}{h}px}\ \Psi(x-q).
\end{equation}
The states of the quantum system are described by density operators, that is by operators
$\varrho$ satisfying the relations: $\varrho=\varrho^\dag$ (Hermitian), $\varrho\geq 0$ (positive) and $\mathrm{tr}\, \varrho=1$ (unit trace).
The displaced parity operators 
\begin{equation} \label{Piqp}
\Pi(q,p)\Psi(x)=\mathrm{e}^{\frac{4\pi \mathrm{i}}{h}p(x-q)}\Psi(2q-x).
\end{equation}
provide a generalized Weyl--Wigner operator representation.
For the Gaussian density operators considered here, the Wigner function
\begin{equation} \label{cWigner}
\mathcal{W}_\varrho(q,p)=\frac{2}{h}\int\limits_{-\infty}^\infty \mathrm{e}^{-\frac{4\pi \mathrm{i}}{h}px}\,\langle q+x|\varrho|q-x\rangle \, dx
\end{equation}
satisfies the relation
\begin{equation} \label{rhosigma}
\varrho=2\int\limits_{\mathbb{R}^2}\mathcal{W}_{\!\varrho} (q,p)\,\Pi(q,p)\,dqdp.
\end{equation}

\subsection{Mixed continuous-variable Gaussian states} For any $\kappa \!\in\!(0,\infty)$, the normalized Gaussian function  
$g_\kappa :\mathbb{R}\rightarrow \mathbb{R}$, 
\begin{equation}
\begin{array}{l}
g_\kappa (q)\!=\! \sqrt[4]{\frac{2\kappa}{h}}\ \mathrm{e}^{-\frac{\kappa \pi}{h}q^2},
\end{array}
\end{equation}
is the wavefunction of a pure state with the density operator 
$\varrho_\kappa=|g_\kappa\rangle\langle g_\kappa|$
, and its Wigner function is
\begin{equation}\label{Fgkc}
\begin{array}{l}
\mathcal{W}_{\varrho_\kappa}(q,p)=\frac{2}{h}\, \mathrm{e}^{-\frac{2\pi}{h}(\kappa q^2+\frac{1}{\kappa} p^2)}.
\end{array}
\end{equation}
{\it Definition 1.} In the zero-displacement case, 
a CVGS  $\rho_\sigma$ is a state with a Wigner function of the form \cite{Weedbrook12,Ferraro05,Wang07,Adesso14,Navarrete22}
\begin{equation}
\!\!\!\begin{array}{l}
\mathcal{W}_{\!\varrho_\sigma} (q,p)\!=\!\frac{2}{h \sqrt{\det\sigma}}\, {\rm exp}\left\{\!-\frac{2\pi }{h} {\scriptstyle (\,q\ \ p\, )}{\scriptstyle \left(\!\!\!
\begin{array}{ll}
\sigma_{11} & \!\sigma_{12}\\
\sigma_{12} & \!\sigma_{22}
\end{array}\!\!\!\right)^{\!\!-1} \!\!\!\left(\!\!
\begin{array}{c}
q\\
p
\end{array}\!\!\right)}\!\right\},
\end{array}
\end{equation}
where  the covariance matrix
\begin{equation}\label{sigcon}
\sigma\!=\!{\scriptstyle \left(\!\!
\begin{array}{cc}
\sigma_{11} & \sigma_{12}\\
\sigma_{12} & \sigma_{22}
\end{array}\!\!\right)},\quad \mbox{satisfying}\quad \sigma \!+\!{\rm i} {\scriptstyle \left(\!\!
\begin{array}{cc}
0 & 1\\
-1 & 0
\end{array}\!\!\right)}\!\geq \!0,
\end{equation}
is a real $2\!\times\!2$ symmetric positive-definite matrix.

\section{Discrete-variable Gaussian states}
\subsection{Quantum systems in odd dimension} In order to have a good analogy with the continuous variable case,  we restrict ourselves to the case when the Hilbert space of the quantum system is an odd-dimensional ($d=2s+1$)  space $\mathcal{H}_d$, which we represent
as the space of all the functions 
$\psi:\{-s,-s+1,...,s-1,s\}\rightarrow \mathbb{C}$ with the scalar product 
\begin{equation} 
\langle\varphi|\psi\rangle=\sum\limits_{n=-s}^s\overline{\varphi(n)}\psi(n),
\end{equation}
Fourier transform
\begin{equation}
(\tilde{F}\psi)(k)\!=\!\frac{1}{\sqrt{d}}\sum\limits_{n=-s}^s \mathrm{e}^{-\frac{2\pi \mathrm{i}}{d}kn}\, \psi(n), 
\end{equation}
and discrete displacement operators \cite{Schwinger60,Atakishiev08}
\begin{equation}
\begin{array}{l}
 \tilde{D}(n ,k )\psi (m)=\mathrm{e}^{-\frac{\pi \mathrm{i}}{d}nk}\, \mathrm{e}^{\frac{2\pi \mathrm{i}}{d}km}\, \psi (m\!-\!n ).
\end{array}
\end{equation}
The discrete-variable quantum states are described by density operators, that is, by operators
$\tilde{\varrho} :\mathcal{H}_d\rightarrow\mathcal{H}_d$ satisfying the conditions
$\tilde{\varrho}^\dag=\tilde{\varrho}$, \, $\tilde{\varrho}\geq0$ \, and \, $\mathrm{tr}\,\tilde{\varrho}=1$.\\ 
{\it Definition 2.} The {\it discrete Wigner function} is defined as
\cite{Wootters87,Cohendet88,Opatrny95,Leonhardt95,Galetti96,Luis98,Tilma16}
\begin{equation}\label{dWigner}
\mathbf{\tilde{\mathcal{W}}}_{\tilde{\varrho}}(n,k)=\!\frac{1}{d}\,\sum\limits_{m=-s}^s {\rm e}^{-\frac{4\pi {\rm i}}{d}km}\, \langle n\!+\!m|\tilde{\varrho}|n\!-\!m\rangle,
\end{equation}
where $\{|-s\rangle,|-s+1\rangle,...,|s-1\rangle ,|s\rangle   \}$ is the canonical basis and the addition is modulo $d$.
It satisfies the relations \cite{Vourdas04}
\begin{eqnarray}
\mathbf{\tilde{\mathcal{W}}}_{\tilde{\varrho}}(n,k) & = & \frac{1}{d}\, \mathrm{tr}(\tilde{\varrho}\, \tilde{\Pi}(n,k)),\\
\tilde{\varrho} & = & \sum\limits_{n,k=-s}^s\mathbf{\tilde{\mathcal{W}}}_{\tilde{\varrho}}(n,k)\ \tilde{\Pi}(n,k), \label{varrhoPi}
\end{eqnarray}
where $\tilde{\Pi}(n,k)$ is the discrete displaced parity operator
\begin{equation} 
\tilde{\Pi}(n,k)\psi(m)=\mathrm{e}^{\frac{4\pi \mathrm{i}}{d}k(m-n)}
\psi(2n-m).
\end{equation}

\begin{figure}[t]
\includegraphics[width=\columnwidth]{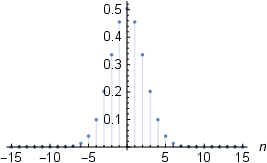}\\
\includegraphics[width=\columnwidth]{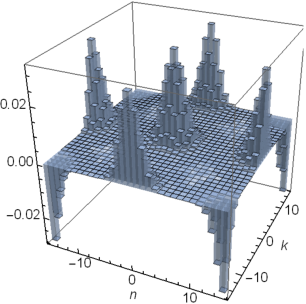}
\caption{The discrete-variable Gaussian function $\tilde{g}_1$ obtained by using (\ref{gausskappa}) and its Wigner function $\tilde{W}_{\!\tilde{\varrho}_1}$ obtained by using (\ref{Wgausskappa}), in the case $\kappa=1$ and $d\!=\!31$. }
\end{figure}

\subsection{Pure discrete-variable Gaussian states} For any $\kappa\in (0,\infty)$, the function \\ $\tilde{g}_\kappa:\{-s,-s+1,...,s-1,s\}\rightarrow \mathbb{C}$, 
\begin{equation} \label{gausskappa}
\tilde{g}_\kappa(n)=c_\kappa \sum\limits_{\ell=-\infty}^\infty
\mathrm{e}^{-\frac{\kappa \pi}{d}(n+\ell d)^2},
\end{equation}
where $c_\kappa$ is a normalizing constant (see Fig. 1), can be regarded \cite{Vourdas04} as a discrete version of $g_\kappa$. 
From a mathematical point of view, it is \cite{Vourdas04,Zhang04,Ruzzi06} a particular case of Jacobi function $\theta_3$. 
In \cite{Cotfas23}, we have proved that the discrete Wigner function $\tilde{\mathcal{W}}_{\!\tilde{\varrho}_\kappa}$ (see Fig. 1) of $\tilde{\varrho}_\kappa=|\tilde{g}_\kappa\rangle\langle\tilde{g}_\kappa|$  can be expressed in terms of the corresponding (see (\ref{Fgkc})) continuous Wigner 
function $\mathcal{W}_{\varrho_\kappa} $ of $g_\kappa$ as
(with $\eta=\sqrt{h/d}$)
\begin{equation}\label{Wgausskappa}
\begin{split}
 \tilde{\mathcal W}_{\tilde\varrho_\kappa}(n,k)
 &=c'_\kappa\sum_{\alpha,\beta\in\Z}(-1)^{\alpha\beta}\\
 &\quad\times\mathcal W_{\varrho_\kappa}
 \bigl((n+\alpha d/2)\eta,(k+\beta d/2)\eta\bigr).
\end{split}
\end{equation} 
where $c'_\kappa$ is a normalizing constant.

\subsection{Mixed discrete-variable Gaussian states} Inspired by the formulas (\ref{varrhoPi}) and (\ref{Wgausskappa}),
we have introduced in \cite{Cotfas23}, for any $d=3,5,7,...$, the discrete version
\begin{equation} \label{DVrhosigma}
\tilde{\varrho}_\sigma =\sum\limits_{n,k=-s}^s\tilde{\mathcal{W}}_{\tilde{\varrho}_\sigma}(n,k) \  \tilde{\Pi}(n,k),
\end{equation}
of CVGS $\varrho_\sigma$, where similar to (\ref{Wgausskappa}), we choose
\begin{equation}
\begin{split}
 \tilde{\mathcal W}_{\tilde\varrho_\sigma}(n,k)
 &=c'_\sigma\sum_{\alpha,\beta\in\Z}(-1)^{\alpha\beta}\\
 &\quad\times\mathcal W_{\varrho_\sigma}
 \bigl((n+\alpha d/2)\eta,(k+\beta d/2)\eta\bigr).
\end{split}
\end{equation}

\subsection{Kernel representation and positivity}
Write
\begin{equation}\label{covariance-entries}
 \sigma=\begin{pmatrix}a&c\\c&b\end{pmatrix}>0,
 \quad D=ab-c^2>0,\quad I_d=\{-s,\ldots,s\}.
\end{equation}
Here $a,b,c$ are the entries of $\sigma$. For $0<D<1$, we use the
same Gaussian expression only as a formal phase-space symbol and its
associated Hermitian kernel, without calling it a state. The inertia
statement below concerns $A_h$; the density-operator interpretation is used
only for $D\ge1$. Fourier inversion of the continuous
Wigner function gives the position kernel
\begin{equation}\label{kernel}
\begin{split}
 K_{\sigma,h}(x,y)&=\sqrt{\frac2{ha}}\exp\!\left[
 -\frac\pi{2ha}\{(x+y)^2+D(x-y)^2\}\right.\\
 &\hspace{32mm}\left.+\frac{i\pi c}{ha}(x^2-y^2)\right].
\end{split}
\end{equation}
Its diagonal integrates to one. The physical covariance is $h\sigma/(4\pi)$.
For the signed construction above, put $\eta=\sqrt{h/d}$ and define
\begin{align}
 X_h(n,k)&=\sum_{\alpha,\beta\in\mathbb Z}(-1)^{\alpha\beta}
 \mathcal W_{\varrho_\sigma}\bigl(\eta(n+\alpha d/2),
                                  \eta(k+\beta d/2)\bigr),\nonumber\\
 A_h&=\sum_{n,k\in I_d}X_h(n,k)\tilde\Pi(n,k),\label{source}\\
 U_{jk}&=\eta\sum_{r,t\in\mathbb Z}
 K_{\sigma,h}\bigl(\eta(j+dr),\eta(k+dt)\bigr).\label{periodized-kernel}
\end{align}
Poisson summation gives the exact identities
\begin{equation}\label{bridge}
 A_h=\frac{2d}{h}U,\qquad
 \tilde\varrho_\sigma=\frac{U}{\operatorname{tr}U}.
\end{equation}
The second identity applies whenever the trace is nonzero; the theorem below
proves it is strictly positive in the physical regime $D\ge1$.
Both $U$ and the normalized matrix are independent of $h$.
\begin{theorem}\label{positivity}
For every odd $d\ge3$ and all parameters above, $A_h\succeq0$ if and only if $D\ge1$, equivalently
$\sigma+i\left(\begin{smallmatrix}0&1\\-1&0\end{smallmatrix}\right)\succeq0$.
The numbers of positive, negative and zero eigenvalues are respectively
\[
\begin{array}{c|ccc}
 &+&-&0\\\hline
0<D<1&(d+1)/2&(d-1)/2&0\\
D=1&1&0&d-1\\
D>1&d&0&0.
\end{array}
\]
For $D<1$, $(e_1-e_{-1})/\sqrt2$ is a negative direction for every odd $d\ge3$.
\end{theorem}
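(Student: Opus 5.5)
By the exact identities (\ref{bridge}), $A_h=\frac{2d}{h}U$ with $2d/h>0$. The inertia of $A_h$ is therefore that of the periodized kernel matrix $U$ in (\ref{periodized-kernel}), which is independent of $h$. I would prove all three rows of the table by reducing to one structural fact. The signed periodization $\mathcal W\mapsto X_h$ is linear. It intertwines continuous phase-space displacements with the corresponding twisted displacements of the discrete Wigner function. Extending the pure-state identity (\ref{Wgausskappa}) of \cite{Cotfas23} from $\kappa$ to arbitrary symplectic (squeezed, sheared, displaced) pure Gaussians gives the following: for a pure Gaussian $\phi$, the signed periodization of $\mathcal W_{|\phi\rangle\langle\phi|}$ equals, up to a positive constant, the discrete Wigner function of $|\psi\rangle\langle\psi|$. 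Here $\psi(j)=\eta^{1/2}\sum_r\phi(\eta(j+dr))$. The check is that (\ref{periodized-kernel}) with the rank-one kernel $\phi(x)\overline{\phi(y)}$ factorizes as $\psi(j)\overline{\psi(k)}$, which is immediate.

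\emph{Case $D\ge1$.} Choose a pure covariance $\sigma_0$ with $\det\sigma_0=1$ and $\sigma_0\preceq\sigma$; for $D>1$ take $\sigma_0\prec\sigma$ strictly. Then $\mathcal W_{\varrho_\sigma}$ is the convolution of $\mathcal W_{\varrho_{\sigma_0}}$ with a nonnegative Gaussian weight of covariance $\sigma-\sigma_0$. By linearity, $U=\int w(q,p)\,|\psi_{q,p}\rangle\langle\psi_{q,p}|\,dq\,dp\succeq0$, where $\psi_{q,p}$ is the periodized sample of the displaced pure Gaussian. For $D=1$ no averaging occurs, so $U=|\psi\rangle\langle\psi|$. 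Here $\psi\neq0$, since the periodized samples of a positive Gaussian are positive up to a chirp phase, so the inertia is $(1,0,d-1)$. For $D>1$ the weight $w$ is strictly positive on $\R^2$. Full rank then follows if the $\psi_{q,p}$ span $\C^d$. Suppose $v\perp\psi_{q,p}$ for all $(q,p)$. The map $(q,p)\mapsto\langle v,\psi_{q,p}\rangle$ is real-analytic, and expanding in the momentum displacement $p$ produces the functionals $v\mapsto\sum_j\overline{v_j}\,e^{2\pi i pj\eta/h}\psi_{q,0}(j)$. Their vanishing for all $p$ and $q$ forces $v=0$, because $\psi_{q,0}(j)\neq0$ for every $j$ at suitable $q$. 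This gives $(d,0,0)$, and in particular $\operatorname{tr}U>0$.

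\emph{Case $0<D<1$.} Here I use parity. The kernel (\ref{kernel}) is invariant under $(x,y)\mapsto(-x,-y)$, so $U$ commutes with the discrete parity $P:e_j\mapsto e_{-j}$. The even subspace of $P$ has dimension $(d+1)/2$ and the odd subspace has dimension $(d-1)/2$. In the continuum, the Wigner function of $\Pi\varrho$ is the symplectic Fourier transform of $\mathcal W_\varrho$. The symplectic Fourier transform of a centered Gaussian with covariance $\sigma$, $\det\sigma=D<1$, is a positive multiple of the centered Gaussian with covariance $\sigma'=J\sigma^{-1}J^{T}$ (with $J$ the symplectic matrix), and $\det\sigma'=1/D>1$. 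So $\varrho_\sigma$ is a positive multiple of $\Pi_0\varrho_{\sigma'}$, with $\Pi_0$ the parity at the origin.

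The key step is the discrete counterpart. I will show that the signed periodization with the factor $(-1)^{\alpha\beta}$ intertwines the continuous symplectic Fourier transform, at the scale $\eta=\sqrt{h/d}$, with the discrete symplectic Fourier transform on $\Z_d^2$. This is a two-dimensional Poisson summation; the half-period shifts $\alpha d/2,\beta d/2$ together with the sign $(-1)^{\alpha\beta}$ are exactly what make the dual lattice sum close up. Granting this, $U_\sigma=\lambda\,P\,U_{\sigma'}$ with $\lambda>0$. Since $U_{\sigma'}\succ0$ by the previous case and it commutes with $P$, $U_\sigma$ is positive definite on the even subspace and negative definite on the odd subspace. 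This gives the inertia $((d+1)/2,(d-1)/2,0)$. The vector $(e_1-e_{-1})/\sqrt2$ is odd, hence a negative direction for every odd $d\ge3$. The equivalence with $\sigma+iJ\succeq0$ is the standard fact that this matrix condition is equivalent to $D\ge1$ for $2\times2$ positive $\sigma$.

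The main obstacle is the Poisson-summation intertwining in the $D<1$ case. The delicate points are tracking the $(-1)^{\alpha\beta}$ signs, the half-integer shifts, and the prefactor, to confirm that $\lambda$ is positive rather than negative. If this becomes unwieldy, I would first try computing $\langle e_j,Ue_k\rangle$ for odd test vectors directly from (\ref{periodized-kernel}) via a Mehler-type expansion of $K_{\sigma,h}$ in eigenfunctions with ratio $r=(\sqrt D-1)/(\sqrt D+1)<0$. This approach also makes the alternating-sign structure visible.
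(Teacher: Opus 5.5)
Your architecture is the paper's: a mixture of pure Gaussians for $D\ge1$, rank one at $D=1$, and reduction of $D<1$ to $\sigma/D$ via parity. For $D\ge1$ the paper uses the degenerate choice with $\sigma-\sigma_0$ purely in momentum, $K_{\sigma,d}(x,y)=g(x)\overline{g(y)}e^{-\lambda(x-y)^2}$.

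The genuine gap is in the $D=1$ row. You justify $\psi\neq0$ by saying the periodized samples are ``positive up to a chirp phase''. This is false for $c\neq0$. We have $\psi(j)\propto\sum_{r\in\Z}e^{-\pi(1-ic)(j+dr)^2/(da)}$, and relative to $r=0$ the $r$-th lift carries the phase $e^{i\pi c(2jr+dr^2)/a}$. That phase varies with $r$, so the lifts can interfere and positivity gives nothing. Without a nonvanishing argument you cannot exclude $U=\psi\psi^*=0$. Then neither the inertia $(1,0,d-1)$ nor $\tr U>0$ (needed to define $\tilde\varrho_\sigma=U/\tr U$) follows. The missing idea is that $\psi(0)$ is a positive multiple of the theta constant $\theta(z)=\sum_{n}e^{\pi izn^2}$ at $z=dc/a+id/a$, and $\theta$ has no zeros in the upper half-plane. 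The paper proves this from $\theta(z)\theta(z+1)=\theta(2z+1)^2$. A zero would produce zeros with imaginary parts tending to infinity, contradicting $|\theta(z)-1|\le2r/(1-r)<1$ for $r=e^{-\pi\operatorname{Im}z}<1/3$. The Jacobi triple product would serve equally well.

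There are two smaller points. First, the step you call the main obstacle for $D<1$ is left as ``granting this'', yet it needs no phase-space Poisson summation. Direct substitution in \eqref{kernel} gives $K_{\sigma,h}(x,y)=D^{-1/2}K_{\sigma/D,h}(x,-y)$; your $\sigma'=J\sigma^{-1}J^{T}$ is exactly $\sigma/D$. Since $U$ is periodized sampling of the position kernel on a lattice symmetric under $y\mapsto-y$, this yields $U_\sigma=D^{-1/2}U_{\sigma/D}R$ immediately, with the positive constant manifest.

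Second, in your $D>1$ spanning argument the displayed functional is correct only for $p\in\eta\Z$, because the $r$-th lift carries the extra factor $e^{2\pi ipr/\eta}$. The claim ``$\psi_{q,0}(j)\neq0$ at suitable $q$'' is again an unargued non-cancellation claim. Both issues disappear if you do not periodize first. The expression $\langle\psi_{q,p},v\rangle=\eta^{1/2}e^{\pi ipq/h}\sum_{x\in\Z}v_{[x]}\overline{\phi(\eta x-q)}\,e^{-2\pi ip\eta x/h}$ is an absolutely convergent Fourier series in $p$. Its vanishing forces $v_{[x]}\overline{\phi(\eta x-q)}=0$ for all $x$, and $\phi$ has no zeros, so $v=0$. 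This is precisely the paper's uniqueness-of-Fourier-coefficients argument.
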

The proof, including the exact Poisson identity, is given in
Appendix~\ref{app-positivity}. The sufficient condition suggested immediately
after Eq.~(22) of the arXiv version of \cite{Cotfas23} is thereby proved for
every odd dimension $d\ge3$; the theorem also establishes necessity and the
complete inertia. On the boundary $D=1$ the discrete
state is pure, as stated in Supplemental Theorem~7 of the arXiv version
of \cite{Cotfas23}; Appendix~\ref{app-positivity} gives a direct proof.
For $D>1$ it is positive definite.

\section{Quantitative approximation}
Assume henceforth $D\ge1$. Write $L=\sqrt{hd}$, $E_d=[-L/2,L/2)$, and define the isometry
\begin{equation}\label{embedding}
 (J_{d,h}e_j)(x)=\frac{\mathbf1_{E_d}(x)}{\sqrt{dL}}
 \sum_{k=-s}^s e^{2\pi ik(x/L-j/d)}.
\end{equation}
Its columns are the discrete Fourier transform of the orthonormal Fourier modes on $E_d$; their values are real by pairing $k$ and $-k$.
For $v>0$ set
\begin{align}
 B_d(v)&=\frac{2(2d/v)^{1/4}e^{-\pi d/(4v)}}{1-e^{-2\pi d/v}},\nonumber\\
 r_d(v)^2&=\frac{2\sqrt{2/(dv)}e^{-\pi(d+1)^2/(2dv)}}{1-e^{-2\pi(d+2)/(dv)}},\nonumber\\
 \epsilon_d&=e^{-\pi d/(4a)}+B_d(a)+\sqrt{B_d(b)^2+r_d(b)^2}.\label{epsilon}
\end{align}
\begin{theorem}\label{convergence}
Let $\varrho_\sigma$ have kernel \eqref{kernel} and let $F(\rho,\tau)=(\tr\sqrt{\sqrt\rho\tau\sqrt\rho})^2$ be squared fidelity. Then
\begin{align}
 \norm{J_{d,h}{\tilde{\varrho}_{\sigma}}J_{d,h}^*-\varrho_\sigma}_1&\le2\min(1,\epsilon_d),\label{tracebound}\\
 F(J_{d,h}{\tilde{\varrho}_{\sigma}}J_{d,h}^*,\varrho_\sigma)&\ge\max(0,1-\epsilon_d^2).
\end{align}
For fixed $\sigma$, $\epsilon_d=O_{a,b}(d^{1/4}e^{-\pi d/(4\max(a,b))})$, uniformly in $h$.
\end{theorem}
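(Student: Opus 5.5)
The plan is to avoid eigenvalue-by-eigenvalue comparisons entirely and to work with a Hilbert--Schmidt purification, so that the trace-norm and fidelity bounds both follow from a single $L^2$ estimate. Define the sampling map $S_{d,h}:L^2(\R)\to\mathcal H_d$ by $(S\psi)(j)=\sqrt\eta\sum_{r\in\Z}\psi(\eta(j+dr))$, and set $T=J_{d,h}S_{d,h}$. From \eqref{periodized-kernel}, $U=S\varrho_\sigma S^*$ at the kernel level, so \eqref{bridge} gives
\[
J\tilde\varrho_\sigma J^*=\frac{T\varrho_\sigma T^*}{\operatorname{tr}U}.
\]
Here $\operatorname{tr}U=\norm{T\varrho_\sigma^{1/2}}_2^2$, and it is positive by Theorem~\ref{positivity}. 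Put $\Psi=\varrho_\sigma^{1/2}$, which is a unit vector in the Hilbert--Schmidt space, and $\Psi'=T\Psi$. By Uhlmann's theorem,
\[
F\ge\frac{|\langle\Psi,\Psi'\rangle_{HS}|^2}{\norm{\Psi'}_2^2}=1-\operatorname{dist}(\Psi,\C\Psi')^2\ge1-\norm{\Psi-\Psi'}_2^2 .
\]
So it suffices to prove $\norm{(T-I)\varrho_\sigma^{1/2}}_2\le\epsilon_d$, which is $\big(\tr[(T-I)\varrho_\sigma(T-I)^*]\big)^{1/2}\le\epsilon_d$. The trace bound \eqref{tracebound} then follows from Fuchs--van de Graaf, $\norm{\cdot}_1\le2\sqrt{1-F}\le2\epsilon_d$, together with the trivial bound $2$. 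This route also takes care of the normalization by $\operatorname{tr}U$ at no extra cost.

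Next, split $T-I$ into three pieces and use the triangle inequality in Hilbert--Schmidt norm.
\begin{itemize}
\item[(i)] Truncation: $\Psi-\mathbf 1_{E_d}\Psi$. Its squared norm is the mass of the diagonal density $K_{\sigma,h}(x,x)\propto e^{-2\pi x^2/(ha)}$ outside $|x|<L/2=\sqrt{hd}/2$. A Gaussian tail estimate bounds it by $e^{-\pi d/(2a)}$, which gives the term $e^{-\pi d/(4a)}$.
\item[(ii)] Position aliasing: the periodization terms $r\neq0$ in $S$. Each image at distance $\eta d r=Lr$ contributes a Gaussian factor. Summing the geometric series and using Cauchy--Schwarz on the kernel should give $B_d(a)$.
\item[(iii)] Momentum band-limitation and sampling: the comparison of $\mathbf 1_{E_d}\Psi$ with $J$ applied to its samples. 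Since $J$ reconstructs the trigonometric polynomial of degree $s$ in $x/L$ through the given samples, the error consists of Fourier coefficients with $|k|>s$ (the tail $r_d(b)$) plus aliasing of those coefficients into $|k|\le s$ (the term $B_d(b)$). Both are governed by the momentum width, since the momentum marginal is $\propto e^{-2\pi p^2/(hb)}$ with cutoff at $p\approx h(d+1)/(2L)$. These two contributions are orthogonal in $L^2(E_d)$, which explains $\sqrt{B_d(b)^2+r_d(b)^2}$.
\end{itemize}

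Everything is dimensionless after the substitution $x=\sqrt h\,y$: the kernel \eqref{kernel} scales as $h^{-1/2}\,\Phi(x/\sqrt h,y/\sqrt h)$, and $\eta$, $L$ and $J$ scale compatibly. This gives uniformity in $h$. The stated asymptotics then come from reading off the dominant term of $\epsilon_d$, which is $d^{1/4}e^{-\pi d/(4\max(a,b))}$, with the prefactor coming from $B_d$.

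I expect step (iii) to be the main obstacle. The piece is the comparison of the periodic-sinc (Dirichlet) interpolant of the sampled periodized function with the Gaussian restricted to $E_d$, done for a mixed state, i.e.\ inside a trace against $\varrho_\sigma$. My first attempt will be to expand $\mathbf 1_{E_d}\Psi$ in the Fourier modes $e^{2\pi ikx/L}$ of $E_d$, replace the finite-interval Fourier coefficients by the full Fourier transform (whose error is already controlled by (i)), and apply Poisson summation to the sample sum. This expresses the discrete coefficients exactly as $\sum_m\hat\psi\bigl((k+md)/L\bigr)$. The trace then reduces to one-dimensional Gaussian sums in the momentum kernel, and these sums are bounded by geometric series, producing the denominators $1-e^{-2\pi d/v}$ and $1-e^{-2\pi(d+2)/(dv)}$.
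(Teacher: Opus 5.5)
Your proposal is correct and is essentially the paper's proof. The paper purifies $\varrho_\sigma$ through the Gaussian mixture $\varrho_\sigma=\int|\psi_t\rangle\langle\psi_t|\,d\mu(t)$ rather than through $\varrho_\sigma^{1/2}$, which also lets the unbounded sampling map act on explicit Schwartz functions, but the controlled quantity $\tr[(T-I)\varrho_\sigma(T-I)^*]$ is the same, it is split into the same four pieces (position tail, translated blocks giving $B_d(a)$, Fourier aliases giving $B_d(b)$, omitted coefficients giving $r_d(b)$), and it concludes with the same projection and purification-fidelity argument. One small correction: replacing the interval Fourier coefficients of $\mathbf 1_{E_d}\Psi$ by $\sqrt\eta\,\widehat\Psi(k\eta)$ costs exactly the periodization term (ii), which must be bounded by Minkowski over the translated blocks (giving $B_d(a)$), not by the truncation tail (i).
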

For $\sigma=2I$ and $d=31$, rational interval arithmetic certifies
$\epsilon_{31}<53952263/10^{12}<1/16000$.
Thus the trace-norm error is less than $1/8000$ and squared fidelity exceeds $1-1/256000000$, for every $h>0$.

Uniformity in $h$ is at fixed dimensionless covariance $\sigma$:
the dilation $(T_hf)(x)=h^{-1/4}f(x/\sqrt h)$ transforms both the state and
the embedding from $h=1$. The dimensional covariance $h\sigma/(4\pi)$
therefore varies with $h$ in this assertion.
For $R_d=J_{d,h}\tilde\varrho_\sigma J_{d,h}^*$, the operational consequence is
\begin{equation}\label{measurement-error}
 \sup_{0\le E\le I}|\tr E(R_d-\varrho_\sigma)|
 =\tfrac12\norm{R_d-\varrho_\sigma}_1
 \le\min(1,\epsilon_d).
\end{equation}
The proof is in Appendix~\ref{app-convergence}. It controls a common
purification by Gaussian position tails, translated position blocks,
Fourier aliases and omitted Fourier coefficients. In particular,
\begin{equation}\label{purity-error}
 \left|\tr\tilde\varrho_\sigma^2-D^{-1/2}\right|
 \le 4\min(1,\epsilon_d).
\end{equation}
Indeed, for density operators $R,S$, $|\tr(R^2-S^2)|\le2\norm{R-S}_1$.
The trace-norm statement alone does not give convergence of an entrywise
$\ell_1$ functional over a changing coherent frame. That comparison is
reported numerically below, separately from the proved state and purity limits.

\section{Frames and coherence}
\subsection{Tight frames} In a $d$-dimensional complex Hilbert space $\mathcal{H}$, an  orthonormal basis is a set of vectors
$\mathcal{B}=\{ \, |\varphi _1\rangle,\, |\varphi _2\rangle,\, ...\, |\varphi _d\rangle\, \}$
satisfying the relation 
$\langle \varphi _j|\varphi_k\rangle =\delta_{jk}$ and the resolution of the identity $ 
    \sum\limits_{j=1}^d|\varphi_j\rangle\langle\varphi_j|=\mathbb{I}_{\mathcal{H}}$, 
where $\mathbb{I}_{\mathcal{H}}:\mathcal{H}\rightarrow\mathcal{H},\ 
\mathbb{I}_{\mathcal{H}}|\psi\rangle =|\psi\rangle$ is the identity operator.
In this case, any $|\psi\rangle \in \mathcal{H}$ admits the representation 
\begin{equation}
    |\psi\rangle \equiv \mathbb{I}_{\mathcal{H}}|\psi\rangle =    \sum\limits_{j=1}^d|\varphi_j\rangle\langle\varphi_j|\psi\rangle,
\end{equation}
where $\langle\varphi_j|\psi\rangle$ are the coordinates of $|\psi\rangle$ in $\mathcal{B}$,
and any linear  operator $A:\mathcal{H}\rightarrow\mathcal{H}$
admits the representation 
\begin{equation}
    A \equiv \mathbb{I}_{\mathcal{H}}A\mathbb{I}_{\mathcal{H}} =    \sum\limits_{j,k=1}^d|\varphi_j\rangle\langle\varphi_j|A|\varphi_k\rangle\langle\varphi_k|,
\end{equation}
where $\langle\varphi_j|A|\varphi_k\rangle$ are the elements of the matrix of $A$.\\
A tight frame \cite{Christensen03,Waldron18,Ferrie08} in $\mathcal{H}$ is a set of vectors
$\mathcal{F}=\{ \, |\varphi _1\rangle,\, |\varphi _2\rangle,\, ...\, |\varphi _m\rangle\, \}$ satisfying the single condition
\begin{equation}\label{kappaI}
      \frac{1}{\kappa} \sum\limits_{j=1}^m|\varphi_j\rangle\langle\varphi_j|=\mathbb{I}_{\mathcal{H}}, 
\end{equation}
where $\kappa \in (0,\infty)$ is a constant.
Evidently, any orthonormal basis is a tight frame, but the converse is not true.\\
In this article, we consider only tight frames containing only unit vectors (vectors of norm 1). In this case, the relation (\ref{kappaI}) can be written as
\begin{equation}\label{dmI}
\begin{array}{l}
       \frac{d}{m}\sum\limits_{j=1}^m|\varphi_j\rangle\langle\varphi_j|= \mathbb{I}_{\mathcal{H}} 
\end{array}       
\end{equation}
(both sides of (\ref{kappaI}) must have the same trace).\\
Generally, the representation of a vector or operator in terms of a frame is not unique,
but there exist the privileged representations defined by the relations 
\begin{equation} \begin{array}{l}
    |\psi\rangle \equiv \mathbb{I}_{\mathcal{H}}|\psi\rangle =    \frac{d}{m}\sum\limits_{j=1}^m|\varphi_j\rangle\langle\varphi_j|\psi\rangle,\\
    A \equiv \mathbb{I}_{\mathcal{H}}A\mathbb{I}_{\mathcal{H}} =    \frac{d^2}{m^2}\sum\limits_{j,k=1}^m|\varphi_j\rangle\langle\varphi_j|A|\varphi_k\rangle\langle\varphi_k|.
    \end{array}
\end{equation}
{\it Example 1.} 
If  $\mathfrak{B}\!=\!\{ |\psi_1\rangle, \, |\psi_2\rangle,\, ...\, , \, |\psi_d\rangle\}$ and 
$\mathfrak{B}'\!=\!\{ |\varphi_1\rangle, \, |\varphi_2\rangle,\, ...\, , \, |\varphi_d\rangle\}$ are two 
orthonormal bases in $\mathcal{H}$, then 
$\mathfrak{F}\!=\!\{ |\psi_1\rangle, \, |\psi_2\rangle,\, ...\, , \, |\psi_d\rangle, |\varphi_1\rangle, \, |\varphi_2\rangle,\, ...\, , \, |\varphi_d\rangle\}$
is a tight frame in $\mathcal{H}$.\\
{\it Example 2.} 
The discrete coherent states \cite{Vourdas04,Tolar97,Klimov09} \\
$|n,k\rangle=\tilde{D}(n,k)\tilde{g}_1$, defined explicitly as
\begin{equation} 
\langle m|n,k\rangle=\mathrm{e}^{-\frac{\pi \mathrm{i}}{d}kn}\ 
\mathrm{e}^{\frac{2\pi \mathrm{i}}{d}km}\ \tilde{g}_1(m-n),
\end{equation}
form a tight frame in $\mathcal{H}_d$, namely
$\frac{1}{d}\!\!\sum\limits_{n,k=-s}^s\!\!|n,k\rangle\langle n,k|\!=\!\mathbb{I}.$\\
{\it Example 3.} 
The canonical coherent states 
$|q,p\rangle=D(q,p)|g_1\rangle$, where $g_1(x)=\sqrt[4]{2/h}\ \mathrm{e}^{-\frac{\pi}{h}x^2}$ is the groundstate of the quantum harmonic oscillator, 
is the most remarkable known tight frame \cite{Gazeau09} in the infinite-dimensional Hilbert space $L^2(\mathbb{R})$. The functions $|q,p\rangle$, explicitly described as
\begin{equation} \label{coherentst}
\langle x|q,p\rangle=\sqrt[4]{\frac{2}{h}}\ \mathrm{e}^{-\frac{\pi \mathrm{i}}{h}pq}\,
\mathrm{e}^{\frac{2\pi \mathrm{i}}{h}px}\ \mathrm{e}^{-\frac{\pi}{h}(x-q)^2},
\end{equation}
satisfy the identity 
$\frac{1}{h}\int\limits_{\mathbb{R}^2}|q,p\rangle\langle q,p| \, dqdp=\mathbb{I}.$
The usual factor $1/(2\pi)$ is recovered when $h=2\pi$.

\subsection{Basis-dependent coherence} 
The $\ell_1$-norm of coherence of a state $\varrho$ with respect to an orthonormal basis 
$\mathcal{B}=\{ \, |\psi _1\rangle,\, |\psi _2\rangle,\, ...\, |\psi _d\rangle\, \}$ is usually defined as \cite{Baumgratz14,Chen16}
\begin{equation} \label{basiscoherence}    
\mathcal{C}_{\ell_1,\mathcal B}(\varrho)=\sum\limits_{j\neq k}|\langle\psi_j|\varrho|\psi_k\rangle|.
\end{equation}
Since $\langle\psi_j|\varrho|\psi_j\rangle\geq 0$ and 
$\sum_{j=1}^d|\langle\psi_j|\varrho|\psi_j\rangle|=
\sum_{j=1}^d\langle\psi_j|\varrho|\psi_j\rangle=\mathrm{tr}\, \varrho =1$,
we have $\sum_{j\neq k}|\langle\psi_j|\varrho|\psi_k\rangle|=\sum_{j,k=1}^d|\langle\psi_j|\varrho|\psi_k\rangle|-1.$ Consequently,
the whole information concerning the $\ell_1$-norm of coherence of $\varrho$ is contained
in $\sum_{j,k=1}^d|\langle\varphi_j|\varrho|\varphi_k\rangle|$.
In order to have a good correspondence discrete $\longleftrightarrow$ continuous, in this article,
we define the $\ell_1$-based basis-dependent coherence functional of $\varrho$ as
\begin{equation} \label{basiscoherence1}    
\mathcal{C}_\mathcal{B}(\varrho)=\frac{1}{2}\sum\limits_{j,k=1}^d|\langle\psi_j|\varrho|\psi_k\rangle|.
\end{equation}

Thus our convention is the affine rescaling
$\mathcal C_{\mathcal B}=(1+\mathcal C_{\ell_1,\mathcal B})/2$ of the usual off-diagonal measure.
It equals $1/2$ on states diagonal in $\mathcal B$.
We use the term frame-dependent coherence for this specified entrywise kernel
functional; no additional faithfulness or resource-monotonicity assertion is
needed for the comparisons below.
Evidently, the value of the coherence of a state $\varrho$ depends on the basis 
$\mathcal{B}$ we choose.

\subsection{Frame-dependent coherence}\label{sec-frame-functional} Let $\mathcal{F}=\{ \, |\varphi _1\rangle,\, |\varphi _2\rangle,\, ...\, |\varphi _m\rangle\, \}$ be a fixed tight frame in the $d$-dimensional Hilbert space $\mathcal{H}$. 
Since 
$\sum\limits_{j=1}^m|\langle\varphi_j|\varrho|\varphi_j\rangle|=
\sum\limits_{j=1}^m\langle\varphi_j|\varrho|\varphi_j\rangle
=\sum\limits_{j=1}^m\mathrm{tr}(\varrho\, |\varphi_j\rangle\langle\varphi_j|)
=\mathrm{tr}(\varrho\,\sum\limits_{j=1}^m |\varphi_j\rangle\langle\varphi_j|)
\!=\!\frac{m}{d}\mathrm{tr}(\varrho \, \mathbb{I}) \!=\!\frac{m}{d}$, again $\sum_{j\neq k}|\langle\varphi_j|\varrho|\varphi_k\rangle|$ and 
$\sum_{j,k=1}^m|\langle\varphi_j|\varrho|\varphi_k\rangle|$ contain the same information.
The definition we propose for the frame-dependent coherence is a direct
extension of the basis-dependent $\ell_1$-based functional (\ref{basiscoherence1}).\\
{\it Definition 3.} The {\it $\ell_1$-based frame-dependent coherence functional} of a state $\varrho$ with respect to the frame $\mathcal{F}=\{ \, |\varphi _1\rangle,\, |\varphi _2\rangle,\, ...\, |\varphi _m\rangle\, \}$ is
\begin{equation} \label{framecoherence}    
\mathcal{C}_\mathcal{F}(\varrho)=\frac{d^2}{2m^2}\sum\limits_{j,k=1}^m|\langle\varphi_j|\varrho|\varphi_k\rangle|.
\end{equation}
To compare conventions, put
$\psi_j=\sqrt{d/m}\,\varphi_j$ and let
$(V\chi)_j=\langle\psi_j|\chi\rangle$ be the isometric analysis map.
The off-diagonal $\ell_1$ coherence of its Naimark state is
\begin{equation}\label{naimark-coherence}
 C_N(\varrho)=\frac dm\sum_{j,k=1}^m
 |\langle\varphi_j|\varrho|\varphi_k\rangle|-1.
\end{equation}
The diagonal sum is one in the Parseval convention, and therefore
\begin{equation}\label{frame-affine}
 \mathcal C_{\mathcal F}(\varrho)=\frac d{2m}(1+C_N(\varrho))
 =\tfrac12 C_{\rm prev}(\varrho)+\frac d{2m}.
\end{equation}
Here $C_{\rm prev}=(d/m)C_N$ is the functional of
\cite{Cotfas25}, Eqs.~(25) and (83), evaluated on the Parseval frame
$\{\psi_j\}$. The present convention retains the diagonal mass and its
corresponding baseline.
If the frame $\mathcal{F}$ is an orthonormal basis, then the frame-dependent coherence coincides with the basis-dependent coherence (\ref{basiscoherence1}).\\

\subsection{Frame-dependent coherence of a discrete-variable Gaussian state} 
The frame-dependent coherence of DVGS $\tilde{\varrho}_\sigma$ with respect to the frame $\tilde{\mathcal{F}}$ of all the discrete coherent states is 
\begin{equation} \label{dframcoher}
\tilde{\mathcal{C}}_{\tilde{\mathcal{F}} }(\tilde{\varrho}_\sigma )
=\frac{1}{2d^2}\sum\limits_{n,k,m,\ell=-s}^s|\langle n,k|\tilde{\varrho}_\sigma|m,\ell\rangle|,
\end{equation}
where
\begin{equation}
\begin{array}{l}
\langle n,k|\tilde{\varrho}_\sigma|m,\ell\rangle =\sum\limits_{a,b=-s}^s\langle n,k|a\rangle\langle a|\tilde{\varrho}_\sigma|b\rangle \langle b|m,\ell\rangle\\
\langle a|\tilde{\varrho}_\sigma|b\rangle =\sum\limits_{u,v=-s}^s\mathbf{\tilde{\mathcal{W}}}_{\tilde{\varrho}_\sigma}(u,v)\ \langle a|\tilde{\Pi}(u,v)|b\rangle,\\
\langle a|\tilde{\Pi}(u,v)|b\rangle=\tilde{\Pi}(u,v)\delta_b(a)=
\mathrm{e}^{\frac{2\pi \mathrm{i}}{d}v(a-b)}\delta_{2u}(a+b).
\end{array}
\end{equation}

\subsection{Frame-dependent coherence of a continuous-variable Gaussian state} By following the analogy with (\ref{dframcoher}), we define the frame-dependent coherence $\mathcal{C}_\mathcal{F}(\varrho_\sigma)$ of a CVGS $\varrho_\sigma$ with respect to the frame $\mathcal{F}$ of all the canonical coherent states as
\begin{equation} 
\mathcal{C}_\mathcal{F}(\varrho_\sigma)=\frac{1}{2h^2}\int\limits_{\mathbb{R}^4}|\langle \alpha,\beta|\varrho_\sigma|\gamma,\delta\rangle|\ d\alpha d\beta d\gamma d\delta.
\end{equation}
The measure of the canonical coherent frame is $dq\,dp/h$.
We retain the factor $1/2$ in the finite-frame convention
(\ref{framecoherence}), so that the continuous prefactor is $1/(2h^2)$.
This convention gives $\mathcal C_{\mathcal F}(|g_1\rangle\langle g_1|)=2$:
$|\langle q,p|g_1\rangle|=e^{-\pi(q^2+p^2)/(2h)}$ has integral $2h$.
It also retains the closed formula below and the discrete normalization
(\ref{dframcoher}). The Wigner inversion (\ref{rhosigma}) has its own factor
$2$, arising from the substitution $y=2q-x$; these are different changes of
variables.\\
The determinant-and-trace formula was stated in \cite{Cotfas26},
Theorem~1. Here the functional has the continuous normalization $1/(2h^2)$
specified above.
\begin{theorem}\label{continuousframe}
The frame-dependent coherence of CVGS $\varrho_\sigma$ with respect to the frame of canonical coherent states is
\begin{equation} \label{coherencecsigma}
\mathcal{C}_\mathcal{F}(\varrho_\sigma)=\frac{1}{\sqrt{\det\sigma}}\sqrt{\det \sigma + \mathrm{tr}\, \sigma +1},
\end{equation}
where \ $\mathrm{det}\, \sigma=\sigma_{11}\sigma_{22}-\sigma^2_{12}$ \ and \ 
$\mathrm{tr}\, \sigma =\sigma_{11}+\sigma_{22}$.
\end{theorem}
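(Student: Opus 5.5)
The plan is to compute the coherent-state matrix elements $\langle\alpha,\beta|\varrho_\sigma|\gamma,\delta\rangle$ in closed form, recognise them as a Gaussian in $(\alpha,\beta,\gamma,\delta)\in\R^4$, and integrate the modulus. A useful preliminary observation is $\det\sigma+\tr\sigma+1=\det(\sigma+I)$. So the claim is equivalent to
\[
\frac{1}{2h^2}\int_{\R^4}\bigl|\langle\alpha,\beta|\varrho_\sigma|\gamma,\delta\rangle\bigr|\,d^4=\sqrt{\frac{\det(\sigma+I)}{\det\sigma}}.
\]
This form also shows that the normalization check done before the statement, $\sigma=I$ giving $2$, is consistent.

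\textbf{Step 1.} I write
\[
\langle\alpha,\beta|\varrho_\sigma|\gamma,\delta\rangle=\int\!\!\int\overline{\langle x|\alpha,\beta\rangle}\,K_{\sigma,h}(x,y)\,\langle y|\gamma,\delta\rangle\,dx\,dy,
\]
using the kernel (\ref{kernel}) and the explicit coherent states (\ref{coherentst}). The integrand is $\exp$ of a quadratic form in $(x,y)$ with linear terms in the four frame parameters. Completing the square gives a prefactor proportional to $(\det M)^{-1/2}$, where $M$ is the complex symmetric $2\times2$ matrix of the $(x,y)$-form. The remaining factor is $\exp(-Q(\alpha,\beta,\gamma,\delta))$ for a complex quadratic form $Q$.

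By the dilation $T_h$ used after Theorem~\ref{convergence}, it suffices to take $h=1$, or equivalently to track $h$ by homogeneity. Each integral contributes a factor $h$ that cancels the $h^{-2}$ prefactor.

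\textbf{Step 2.} I take moduli: $|\langle\cdot\rangle|=|\det M|^{-1/2}\,c\,\exp(-\operatorname{Re}Q)$, and then
\[
\int_{\R^4}e^{-\operatorname{Re}Q}=\frac{\pi^2}{\sqrt{\det G}},
\]
where $G$ is the real $4\times4$ matrix with $\operatorname{Re}Q=v^{T}Gv$.

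To make $G$ manageable, I change variables to the centre and difference $u=(z+w)/2$, $v=z-w$ in phase space, with $z=(\alpha,\beta)$ and $w=(\gamma,\delta)$. The Jacobian is $1$. In these variables $\operatorname{Re}Q$ should become block diagonal, because displacement covariance makes the dependence on the centre and on the difference separate up to phases. The $u$-block should be governed by $(\sigma+I)^{-1}$, as for the Husimi function, whose covariance is $\sigma+I$. The $v$-block should be governed by $\sigma^{-1}+I$ or a similar combination.

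\textbf{Step 3.} I multiply the pieces together: the prefactors from $g_1$ and $K$, the factor $|\det M|^{-1/2}$, and $\pi^2/\sqrt{\det G}$. The factors of $\sqrt a$ in $K$ cancel against $|\det M|$. I expect the determinants to collapse to $\sqrt{\det(\sigma+I)/\det\sigma}$, using $\det(\sigma^{-1}+I)=\det(\sigma+I)/\det\sigma$.

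\textbf{Main obstacle.} The hard part is the bookkeeping in Steps 2 and 3: finding $\operatorname{Re}Q$ correctly, with the chirp term $i\pi c(x^2-y^2)/(ha)$ mixing into the phases, and verifying the block structure.

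To keep this tractable, I would first reduce to $c=0$ by a symplectic rotation. This does not work directly, because the coherent frame is only invariant under rotations, not under shears. So the first approach I would try is to diagonalize $\sigma$ by an orthogonal rotation $R$, since $|q,p\rangle$ transforms covariantly under the metaplectic rotation. Both $\det\sigma$ and $\tr\sigma$ are rotation invariant, so this reduces the problem to $\sigma=\operatorname{diag}(a,b)$. Then $K$ has $c=0$, the integral factorizes, and I would finally check the result against $\sigma=I$, for which it equals $2$.
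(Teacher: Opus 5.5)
Your plan is correct, and it differs from the paper's proof in two ways.

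\emph{How the matrix element is computed.} The paper never uses the position kernel. It inserts the Wigner inversion (\ref{rhosigma}) and computes $\langle\alpha,\beta|\Pi(q,p)|\gamma,\delta\rangle$ in closed form. It then integrates this against the Gaussian $\mathcal W_{\varrho_\sigma}$ over $(q,p)$, carrying the off-diagonal entry of $\sigma^{-1}$ through every completion of squares. Only at the end does it pass to $\alpha\pm\gamma$, $\beta\pm\delta$, which is your centre/difference split up to factors of $2$.

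\emph{Reduction to diagonal $\sigma$.} This is the genuinely new step, and it is valid. In the symmetric convention for $D(q,p)$, the metaplectic unitary $\mu(R)$ of a rotation satisfies $\mu(R)D(z)\mu(R)^*=D(Rz)$ and $\mu(R)g_1=e^{i\phi}g_1$, with $\phi$ independent of $z$. Also $\mu(R)\varrho_\sigma\mu(R)^*=\varrho_{R\sigma R^T}$. Hence $\langle Rz|\varrho_{R\sigma R^T}|Rw\rangle=\langle z|\varrho_\sigma|w\rangle$, and the integral is unchanged because $dz\,dw$ is rotation invariant. You should state these facts explicitly, since the $z$-independence of the phase is what makes the modulus exactly invariant. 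After diagonalization the modulus is a product of four one-dimensional Gaussians in $\alpha\pm\gamma$, $\beta\pm\delta$. The result is $\sqrt{(1+\sigma_{11})(1+\sigma_{22})/(\sigma_{11}\sigma_{22})}=\sqrt{\det(\sigma+I)/\det\sigma}$.

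\emph{Comparison.} Your route is much shorter and explains why only the rotation invariants $\det\sigma$ and $\tr\sigma$ can appear. The paper's route needs no metaplectic covariance and produces the matrix element for arbitrary $\sigma$ directly.

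\emph{If you keep the general computation.} Your tentative difference block must be inverted. With $u=(z+w)/2$, $v=z-w$ and $J=\left(\begin{smallmatrix}0&1\\-1&0\end{smallmatrix}\right)$, one finds
\[
|\langle z|\varrho_\sigma|w\rangle|=\frac{2}{\sqrt{\det(\sigma+I)}}\exp\left[-\frac{2\pi}{h}u^{T}(\sigma+I)^{-1}u-\frac{\pi}{2h}v^{T}J(\sigma^{-1}+I)^{-1}J^{T}v\right].
\]
The centre block is the Husimi one, as you predicted. It is $\det\bigl((\sigma^{-1}+I)^{-1}\bigr)=\det\sigma/\det(\sigma+I)$ that produces the factor you expect. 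Using $\sigma^{-1}+I$ itself would give $1/2$ instead of $2$ at $\sigma=I$. Your claimed cancellation of $\sqrt a$ does hold: $\det M=\pi^2\det(\sigma+I)/(h^2a)$, which yields the prefactor $2/\sqrt{\det(\sigma+I)}$.
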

The detailed Gaussian calculation is in Appendix~\ref{app-frame}.

Table~\ref{purityc} presents the purity  $\mathrm{tr}\,{\varrho}^2_\sigma =\frac{1}{\sqrt{\mathrm{det}\, \sigma}}$ and the coherence $\mathcal{C}_{\mathcal{F}}(\varrho_\sigma )$ computed by using (\ref{coherencecsigma}) of a CVGS ${\varrho}_\sigma$ versus the values of purity $\mathrm{tr}\,\mathbf{\tilde{\varrho}}^2_\sigma  $ and coherence 
$\tilde{\mathcal{C}}_{\tilde{\mathcal{F}}}(\mathbf{\tilde{\varrho}}_\sigma )$ of the corresponding DVGS 
$\mathbf{\tilde{\varrho}}_\sigma$ for $d\in\{ 3,5,7,9,11,13,15\}$, in five particular cases. Purity convergence follows from
Eq.~(\ref{purity-error}). The frame values approach the continuous formula in
these examples; an all-covariance rate for this changing-frame functional is
a separate problem.
The values of $\mathrm{tr}\,\mathbf{\tilde{\varrho}}^2_\sigma  $ have been obtained by using the formula
\begin{equation}
\mathrm{tr}\,\mathbf{\tilde{\varrho}}^2_\sigma  =\sum\limits_{n,k=-s}^s\left|\langle n|\tilde{\varrho}_\sigma|k\rangle\right|^2,
\end{equation}
and those of $\tilde{\mathcal{C}}_{\tilde{\mathcal{F}} }(\tilde{\varrho}_\sigma )$ by using (\ref{dframcoher}).

\begin{table*}[t]
\caption{\label{purityc}Purity and frame coherence of the continuous Gaussian
state and its discrete approximations. The original five covariances and
dimensions are retained. Discrete frame entries are numerical evaluations
cross-checked through the signed Wigner and periodized-kernel representations;
they illustrate convergence, while Theorem~\ref{convergence} proves trace-norm
convergence and Eq.~(\ref{purity-error}) controls purity.}
\begin{ruledtabular}
\begin{tabular}{cccccc}
$\sigma$ & $\tr\varrho_\sigma^2$ & $\mathcal C_{\mathcal F}(\varrho_\sigma)$
& $d$ & $\tr\tilde\varrho_\sigma^2$ & $\tilde{\mathcal C}_{\tilde{\mathcal F}}(\tilde\varrho_\sigma)$\\
\colrule
$\left(\begin{smallmatrix}7&-\pi\\-\pi&5\end{smallmatrix}\right)$ & 0.1995 & 1.2318 & $\begin{array}{c}3\\5\\7\\9\\11\\13\\15\end{array}$ & $\begin{array}{c}0.3390\\0.2333\\0.2081\\0.2017\\0.2001\\0.1996\\0.1995\end{array}$ & $\begin{array}{c}0.7497\\0.9616\\1.0778\\1.1475\\1.1867\\1.2077\\1.2188\end{array}$\\
\colrule
$2I$ & 0.5000 & 1.5000 & $\begin{array}{c}3\\5\\7\\9\\11\\13\\15\end{array}$ & $\begin{array}{c}0.5287\\0.5010\\0.5000\\0.5000\\0.5000\\0.5000\\0.5000\end{array}$ & $\begin{array}{c}0.8648\\1.2438\\1.3909\\1.4551\\1.4811\\1.4920\\1.4966\end{array}$\\
\colrule
$\left(\begin{smallmatrix}1&\sqrt3\\\sqrt3&6\end{smallmatrix}\right)$ & 0.5774 & 1.9149 & $\begin{array}{c}3\\5\\7\\9\\11\\13\\15\end{array}$ & $\begin{array}{c}0.7021\\0.5949\\0.5796\\0.5776\\0.5774\\0.5774\\0.5774\end{array}$ & $\begin{array}{c}1.0121\\1.3780\\1.5978\\1.7269\\1.8024\\1.8465\\1.8730\end{array}$\\
\colrule
$\left(\begin{smallmatrix}3&2\\2&2\end{smallmatrix}\right)$ & 0.7071 & 2.0000 & $\begin{array}{c}3\\5\\7\\9\\11\\13\\15\end{array}$ & $\begin{array}{c}0.6633\\0.7009\\0.7064\\0.7070\\0.7071\\0.7071\\0.7071\end{array}$ & $\begin{array}{c}1.0308\\1.5586\\1.8052\\1.9137\\1.9613\\1.9826\\1.9921\end{array}$\\
\colrule
$\left(\begin{smallmatrix}2&0\\0&1/2\end{smallmatrix}\right)$ & 1.0000 & 2.1213 & $\begin{array}{c}3\\5\\7\\9\\11\\13\\15\end{array}$ & $\begin{array}{c}1.0000\\1.0000\\1.0000\\1.0000\\1.0000\\1.0000\\1.0000\end{array}$ & $\begin{array}{c}1.1623\\1.6437\\1.8907\\2.0027\\2.0580\\2.0868\\2.1022\end{array}$\\
\end{tabular}\end{ruledtabular}\end{table*}

\section{Position and momentum coherence}
The basis and frame
quantities below examine diagnostics accompanying the state approximation.
Appendix~\ref{app-theta} also records exact theta representations and
imaginarity formulas, distinguishing finite-dimensional chirp aliasing from
the fixed-covariance limit.
The basis convention in Eq.~(\ref{basiscoherence1}) includes the diagonal:
\begin{equation}\label{coherence-convention}
 C_B(\rho)=\tfrac12\sum_{j,k}|\rho_{jk}|=\tfrac12(1+C_{\ell_1}(\rho)).
\end{equation}
In the continuous position representation define the kernel mass, with its scale explicit,
\begin{equation}\label{cv-coherence}
\begin{aligned}
 J_Q&=\iint|K_{\sigma,h}(x,y)|dxdy=\sqrt{\frac{2ha}{D}},\\
 \widehat C_Q&=\frac{J_Q}{2\sqrt h}=\sqrt{\frac a{2D}}.
\end{aligned}
\end{equation}
The change $u=(x+y)/2$, $v=x-y$ has unit Jacobian and proves this by two Gaussian integrals.
Fourier transformation sends $(a,b,c)$ to $(b,a,-c)$, so
$J_P=\sqrt{2hb/D}$ and $\widehat C_P=\sqrt{b/(2D)}$.
These finite kernel functionals require a reference scale; they are not ordinary coherence in a countable orthonormal basis.

There is no trace-one normal state exactly diagonal in position.
Such an operator would be multiplication by a nonzero $f$ on $L^2(\R)$.
In a region $|f|\ge\delta>0$, choose infinitely many disjoint positive-measure sets and their normalized indicators. Their images are orthogonal and have norms at least $\delta$, so multiplication by $f$ is not compact, hence not trace class.
Thus exact-position dephasing does not produce a normal diagonal density operator. The corresponding relative-entropy distance to normal diagonal states has an empty feasible set; it is $+\infty$ under the infimum convention. A finite-resolution channel gives a different, meaningful definition; see \cite{sajnok}.

\begin{theorem}\label{scaling}
Let $C_{Q,d}=\frac12\sum_{j,k\in I_d}|({\tilde{\varrho}_{\sigma}})_{jk}|$ and define $C_{P,d}$ in the discrete Fourier basis. For fixed $\sigma$,
\[
 \frac{C_{Q,d}}{\sqrt d}\longrightarrow\sqrt{\frac a{2D}},\qquad
 \frac{C_{P,d}}{\sqrt d}\longrightarrow\sqrt{\frac b{2D}}
\]
exponentially. In particular, the unscaled quantities diverge as $\sqrt d$.
Here is an explicit relative error bound. Set $\theta(t)=\sum_n e^{-\pi tn^2}$,
\begin{align*}
 Z_e&=\theta(ad/2)\theta(Dd/(2a))-1,&Q_e&=\theta(ad/D)^2-1,\\
 L_0&=\sqrt{2ad/D},&A_0&=1+\sqrt{ad},\\
 B_0&=\frac{2e^{-\pi(d+1)^2/(4ad)}}{1-e^{-\pi(d+2)/(ad)}},&
 E_0&=2\sqrt{2/(ad)}A_0B_0.
\end{align*}
Whenever $Z_e<1$,
\begin{equation}\label{scaling-bound}
 \left|\frac{C_{Q,d}}{\sqrt{ad/(2D)}}-1\right|
 \le\frac{Q_e+Z_e+2E_0/L_0}{1-Z_e}.
\end{equation}
The momentum bound replaces $a$ by $b$.
\end{theorem}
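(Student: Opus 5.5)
The plan is to work entirely with the periodized kernel \eqref{periodized-kernel}, using the exact identity $\tilde\varrho_\sigma=U/\tr U$ from \eqref{bridge}. Since $U$ is independent of $h$, we fix $h=1$, so $\eta=d^{-1/2}$. Then
\[
 C_{Q,d}=\frac{\tfrac12\sum_{j,k\in I_d}|U_{jk}|}{\tr U},
\]
and the proof reduces to two separate estimates: one for the denominator $\tr U$ and one for the numerator $\sum|U_{jk}|$. Each is compared with its continuum value through Poisson summation of an explicit Gaussian. The momentum statement then follows by applying the same argument after the discrete Fourier transform. On the kernel side, that transform sends $(a,b,c)\mapsto(b,a,-c)$, as for $J_P$, and on the discrete side it intertwines the signed periodization. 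So it suffices to treat position, and the momentum bound is obtained by replacing $a$ with $b$.

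\emph{Denominator.} As $j$ runs over $I_d$ and $r$ over $\Z$, the index $n=j+dr$ runs over all of $\Z$. Hence
\[
 \tr U=\eta\sum_{n\in\Z}\sum_{t\in\Z}K\bigl(\eta n,\eta(n+dt)\bigr).
\]
I would write this as a single lattice sum of the Gaussian $K$ over the rank-two lattice $\{(\eta n,\eta(n+dt))\}$. In the coordinates $u=(x+y)/2$, $v=x-y$ the Gaussian factorizes, and the chirp phase contributes $e^{2\pi i c\,uv/a}$. Poisson summation then expresses $\tr U$ as $1$ (the integral of the diagonal) plus dual-lattice terms. After bounding the phases by absolute values, the dual sum separates into a product of two theta series, which I expect to give $|\tr U-1|\le\theta(ad/2)\theta(Dd/(2a))-1=Z_e$. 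The condition $Z_e<1$ guarantees $\tr U>1-Z_e>0$, and this produces the factor $1/(1-Z_e)$ in \eqref{scaling-bound}.

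\emph{Numerator.} I would split $U_{jk}$ into a principal alias and the rest. For $j,k\in I_d$, the principal part consists of the terms with $r=t$, whose dominant contribution is the $r=t=0$ term $\eta K(\eta j,\eta k)$. For the modulus sum, $\tfrac12\sum_{j,k}|\eta K(\eta j,\eta k)|$ over all of $\Z^2$ is a Riemann sum for $\tfrac12\eta^{-1}\iint|K|=\tfrac12\sqrt d\,J_Q=\sqrt{ad/(2D)}$, using \eqref{cv-coherence} with $h=1$. Since $|K|$ is a real Gaussian with no chirp, Poisson summation in the $(u,v)$ variables gives relative error $\theta(ad/D)^2-1=Q_e$. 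The remaining contributions are of two kinds: the translated blocks with $(r,t)\ne(0,0)$, and the part of the $\Z^2$ sum lying outside $I_d\times I_d$. Both require $|x|$ or $|y|$ to exceed about $\eta(d+1)/2$. I would bound them by one-dimensional Gaussian tail sums. Summing the diagonal factor $e^{-\pi(x+y)^2/(2a)}$ over the free index gives a factor of order $A_0=1+\sqrt{ad}$ (a sum compared with its integral). The tail in the off-block coordinate gives a geometric series $B_0$ with ratio $e^{-\pi(d+2)/(ad)}$, and the prefactor $\sqrt{2/(ad)}$ comes from the kernel normalization. This yields an absolute error $E_0$, counted twice, relative to $L_0$. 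Combining the numerator and denominator estimates via $|N/T-1|\le(|N-1|+|T-1|)/T$ gives \eqref{scaling-bound}. Exponential convergence follows because $Q_e$, $Z_e$ and $E_0$ all decay like $e^{-\pi d/(\mathrm{const}\cdot a)}$ up to polynomial factors, and the divergence of $C_{Q,d}$ as $\sqrt d$ is immediate from the limit.

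\emph{Main obstacle.} I expect the hardest step to be the tail bookkeeping for $E_0$. The points $(x,y)$ must be organized so that every discarded alias is charged exactly once to a one-dimensional tail, and the chirp phase must not spoil the use of the real Gaussian $e^{-\pi(x+y)^2/(2a)}e^{-\pi D(x-y)^2/(2a)}$ as an envelope. My first attempt would be to use the bound $e^{-\pi D(x-y)^2/(2a)}\le1$ on the off-block coordinate only after summing the diagonal direction. If that envelope loses too much, I would instead keep $D\ge1$ and bound $(x+y)^2+D(x-y)^2\ge 2\min(x^2,y^2)$-type quantities directly.
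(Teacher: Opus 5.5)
Your architecture matches the paper's proof in Appendix~\ref{app-basis}. You write $C_{Q,d}=M/(2\tr U)$ with $M=\sum_{j,k\in I_d}|U_{jk}|$, compare $M$ with the full lattice mass $S$ of $|K|$, estimate $S$ and $\tr U$ by Poisson summation, combine via $|N/T-1|\le(|N-1|+|T-1|)/T$, and use Fourier covariance for momentum. The gap is in the step you flag as the main obstacle, and your first attempt there fails. Both Gaussian factors depend on both coordinates. After dropping $e^{-\pi D(x-y)^2/(2a)}$, the remaining envelope is constant along the lines $x+y=\mathrm{const}$, so nothing decays in the off-block coordinate and the tail sum diverges. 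A lower bound of the exponent by a multiple of $\min(x^2,y^2)$ fails for the same reason.

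The missing fact is the isotropic majorant supplied by $D\ge1$: $(x+y)^2+D(x-y)^2\ge2(x^2+y^2)$. In integer coordinates ($h=d$) this gives $|K_{\sigma,d}(x,y)|\le\sqrt{2/(da)}\,e^{-\pi(x^2+y^2)/(da)}$, which factorizes. The full one-dimensional sum is at most $1+\sqrt{ad}=A_0$; summing your diagonal factor would instead give $1+\sqrt{2ad}$. The tail from $(d+1)/2$ is at most $B_0$. A union bound over the two coordinates then bounds the mass on $\Z^2\setminus I_d^2$ by $E_0$. Your bookkeeping worry is settled by noting that $(j,k,r,t)\mapsto(j+dr,k+dt)$ with $(r,t)\ne(0,0)$ is a bijection onto $\Z^2\setminus I_d^2$. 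Hence the triangle inequality gives $S-2E_0\le M\le S$.

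Similarly, $Q_e$ comes from two-dimensional Poisson summation on $\Z^2$, using the eigenvalues $1/(da)$ and $D/(da)$ of the real form. All dual terms are positive, which gives $S\ge L_0$. Since $D\ge1$, $ad/D$ is the smaller dual eigenvalue, which gives $S\le L_0\theta(ad/D)^2$. Row-by-row Poisson in $(u,v)$, where the lattice is not a product, yields $\theta(ad/2)\theta(2ad/D)$ instead and hides the lower bound.

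The normalizer needs one lemma you do not state. Poisson summation must be taken only in the diagonal index $n$ at fixed $t$; this is what produces the leading term $1$. A full two-dimensional Poisson sum over the trace lattice has zero term $\sqrt{2/(bd)}$ and a Riemann-sum-like dual series. The one-variable Poisson step turns the chirp into a shift and gives, with $m=-t$, $\tr U=\sum_{m,k}(-1)^{mk}e^{-\pi d(ak^2-2ckm+bm^2)/2}$. Its $m$-th row is $e^{-\pi dDm^2/(2a)}\sum_k(-1)^{mk}e^{-\pi da(k-cm/a)^2/2}$.

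Replacing $(-1)^{mk}$ by $1$ does not factor this into two theta series when $c\ne0$, because the $k$-sums are shifted. You need $\sum_k e^{-\pi\tau(k-x)^2}\le\theta(\tau)$ for all real $x$, which follows from the Poisson expansion and $\cos\le1$. With it, the row $m=0$ minus the $(0,0)$ term contributes $\theta(ad/2)-1$. Each row $m\ne0$ contributes at most $\theta(ad/2)e^{-\pi dDm^2/(2a)}$, so the total is exactly $Z_e$. With these two repairs your argument coincides with the paper's and yields \eqref{scaling-bound}.
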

The proof is in Appendix~\ref{app-basis}.

Table~\ref{comparison} compares the continuous position and momentum kernel
masses with the rescaled discrete basis values. Its discrete entries have
outward-rounded Arb enclosures with an analytic Gaussian remainder, of width
at most $2\cdot10^{-11}$; the displayed values are rounded to eight decimals.
The accompanying data cover eight covariances at each
$d\in\{3,5,7,9,11,13,15,21,31,51\}$. The quantity $J_2$ is defined in
Appendix~\ref{app-theta}.

\section{Conclusions and outlook}
For the signed discrete-variable Gaussian construction we have proved the
one-mode uncertainty criterion for positivity in every odd dimension $d\ge3$ and
determined its full inertia. Our main approximation theorem embeds all these
finite matrices into $L^2(\mathbb R)$ by a covariance-independent isometry and
gives an explicit exponential trace-norm error bound. Purity convergence
follows, and position- and momentum-basis coherences converge exponentially
after division by $\sqrt d$.

We have presented a version for continuous-variable Gaussian states of the
frame-dependent coherence introduced in \cite{Cotfas25} for discrete-variable
quantum states. The explicit formula (\ref{coherencecsigma}) describes this
functional in terms of the covariance determinant and trace, with the
continuous and discrete prefactors fixed consistently. The original
coherent-frame comparisons and the new basis comparisons show the different
scalings involved. The numerical coherent-frame convergence remains distinct
from the trace-norm and basis-coherence theorems. Extensions to displaced and
multimode Gaussian states, and an explicit all-covariance error bound for the
coherent-frame functional, are natural further questions.

The transition from orthonormal bases to tight frames represents a natural
way to introduce redundant descriptions. The formula (\ref{rhosigma}) shows
that the Wigner representation uses displaced parity operators. By using
tight frames instead, a more general version of the Wigner function
\cite{DeBrota20,Ferrie11} can be defined \cite{Cotfas24}.


\section*{Data availability}
The original figure and covariance examples are from \cite{Cotfas26}.
The accompanying editable source archive \texttt{source\_20261028.zip}
contains the manuscript and original figures.
The verification package, available from the authors on request, contains
\texttt{verification/certificates.csv} (80 Arb-certified rows),
\texttt{verification/frame\_comparison.csv} (35 frame comparisons),
\texttt{verification/revision\_checks.json}, the programs, pinned requirements
and \texttt{README.md} with exact reproduction commands.
The certificate table uses 192-bit Arb arithmetic and analytic pre-normalization
Gaussian tails below $10^{-36}$. The frame table is independently
cross-checked double-precision computation. The finite numerical checks
support the analytic proofs and do not replace their all-parameter scope.

\section*{Author contributions}
Nicolae Cotfas provided the signed Wigner construction, the continuous-frame
coherence theorem and the original manuscript. Haobo Ma and Wenlin Zhang
contributed the positivity and quantitative approximation theorems, the
proofs in Appendices~A--D and the verification programs.

\section*{Declarations}
For Haobo Ma and Wenlin Zhang: This research received no specific funding.
The authors declare no competing interests.
Haobo Ma and Wenlin Zhang used AI tools, namely ChatGPT Pro (GPT-6 Pro), Codex (gpt-6.1-sol) and Claude Opus 5.5 (via Claude Code), for literature search, computation and double-checking. All proofs can be checked by hand, and the computations can be reproduced from the verification package. H. Ma and W. Zhang take full responsibility for their contributions.

\onecolumngrid
\begin{center}
\begin{minipage}{\textwidth}
\makeatletter\def\@captype{table}\makeatother
\centering\small
\caption{Purity, raw position coherence, rescaled basis coherences, and $J_{2,d}$.
The continuous rows list the rescaled targets; raw $C_Q$ diverges.}\label{comparison}
\begin{tabular}{llrrrrr}\toprule
$\sigma$&$d$&purity&$C_Q$&$C_Q/\sqrt d$&$C_P/\sqrt d$&$J_2$\\\midrule
$2I$&3&0.52865133&0.86591218&0.49993463&0.49993463&0.00000000\\
$2I$&15&0.50000000&1.93649167&0.50000000&0.50000000&0.00000000\\
$2I$&31&0.50000000&2.78388218&0.50000000&0.50000000&0.00000000\\
$2I$&CV&0.50000000&---&0.50000000&0.50000000&0.00000000\\\addlinespace
$\left(\begin{smallmatrix}3&2\\2&2\end{smallmatrix}\right)$&3&0.66328874&1.08270820&0.62510187&0.57183463&0.17563468\\
$\left(\begin{smallmatrix}3&2\\2&2\end{smallmatrix}\right)$&15&0.70710672&3.35004400&0.86497764&0.70703466&0.14943540\\
$\left(\begin{smallmatrix}3&2\\2&2\end{smallmatrix}\right)$&31&0.70710678&4.82181826&0.86602412&0.70710678&0.14942925\\
$\left(\begin{smallmatrix}3&2\\2&2\end{smallmatrix}\right)$&CV&0.70710678&---&0.86602540&0.70710678&0.14942925\\\addlinespace
$\left(\begin{smallmatrix}1&1\\1&2\end{smallmatrix}\right)$&3&1.00000000&1.14003562&0.65819987&0.81401716&0.22524480\\
$\left(\begin{smallmatrix}1&1\\1&2\end{smallmatrix}\right)$&15&1.00000000&2.73860400&0.70710451&0.99902166&0.14643896\\
$\left(\begin{smallmatrix}1&1\\1&2\end{smallmatrix}\right)$&31&1.00000000&3.93700394&0.70710678&0.99999868&0.14644661\\
$\left(\begin{smallmatrix}1&1\\1&2\end{smallmatrix}\right)$&CV&1.00000000&---&0.70710678&1.00000000&0.14644661\\\addlinespace
\bottomrule\end{tabular}
\end{minipage}
\end{center}
\appendix
\section{Positivity and exact signed periodization}\label{app-positivity}
We first identify the signed Wigner construction with its periodized kernel.
Scale to $h=d$, so $U_{ij}=\sum_{r,t}K_{\sigma,d}(i+dr,j+dt)$.
Choose the unique $n\in I_d$ and integer $\delta$ with $2n=i+j+d\delta$.
In \eqref{source}, set $u=\alpha+\delta$ and $v=2k+d\beta$.
Oddness of $d$ makes $(k,\beta)\mapsto v$ a bijection, and the sign and parity phase give
\[
 (A_d)_{ij}=\sum_{u,v\in\Z}
 \mathcal W_{\varrho_\sigma,d}\left(\frac{i+j+du}{2},\frac v2\right)
 e^{i\pi v((i-j)/d+u)}.
\]
For $q,w\in\R$, the Gaussian Fourier transform and Poisson summation give
\[
 \sum_{v\in\Z}\mathcal W_{\varrho_\sigma,d}(q,v/2)e^{i\pi vw}
 =2\sum_{m\in\Z}K_{\sigma,d}\bigl(q+d(w/2-m),q-d(w/2-m)\bigr).
\]
Substitute $q=(i+j+du)/2$, $w=(i-j)/d+u$, and then $r=u-m$, $t=m$.
This proves $A_d=2U$; scaling \eqref{cWigner} proves \eqref{bridge} for all $h$.
Positive definiteness of the real quadratic forms bounds all absolute summands by a Gaussian, justifying these rearrangements.

\begin{proof}
At $h=d$ put
$g(x)=(2/(da))^{1/4}e^{-\pi x^2/(da)+i\pi cx^2/(da)}$ and
$\lambda=\pi(D-1)/(2da)$. Then
$K_{\sigma,d}(x,y)=g(x)\overline{g(y)}e^{-\lambda(x-y)^2}$.
Here $[x]$ denotes the representative of $x\bmod d$ in $I_d$.
If $D>1$, for $v\in\C^d$,
\begin{equation}\label{positive-integral}
 v^*Uv=\frac1{\sqrt{4\pi\lambda}}
 \int_{\R}e^{-t^2/(4\lambda)}
 \left|\sum_{x\in\Z}\overline{v_{[x]}}g(x)e^{itx}\right|^2dt>0\quad(v\ne0).
\end{equation}
Strictness follows from uniqueness of the absolutely convergent Fourier series and $g(x)\ne0$.
For $D=1$, $U=zz^*$ with $z_j=\sum_r g(j+dr)$, and $z_0\ne0$.
To check the last point without a product formula, let
$\theta(z)=\sum_n e^{\pi izn^2}$ for $\operatorname{Im}z>0$.
Parity separation gives $\theta(z)\theta(z+1)=\theta(2z+1)^2$.
A zero would iterate to zeros with imaginary parts tending to infinity, contradicting
$|\theta(z)-1|\le2r/(1-r)<1$ when $r=e^{-\pi\operatorname{Im}z}<1/3$.
Apply this to $z=dc/a+id/a$.

Direct substitution gives
$K_{\sigma,d}(x,y)=D^{-1/2}K_{\sigma/D,d}(x,-y)$ and hence
$U_\sigma=D^{-1/2}U_{\sigma/D}R$, where $(Rv)_j=v_{-j}$.
The matrix $U_{\sigma/D}$ commutes with $R$ by simultaneous sign reversal.
If $D<1$, it is positive definite since $\det(\sigma/D)=1/D>1$.
Thus $U_\sigma$ is positive on the even subspace and negative on the odd subspace, of dimensions $s+1$ and $s$. Finally, the displayed uncertainty matrix has determinant $D-1$ and positive diagonal entries.
\end{proof}
Restricting \eqref{positive-integral} to $[-\pi,\pi]$ and applying Parseval gives
$v^*Uv\ge\sqrt{\pi/\lambda}e^{-\pi^2/(4\lambda)}\sum_x|v_{[x]}|^2|g(x)|^2$.
Apply this to $U_{\sigma/D}$ and the unit odd vector to obtain the quantitative certificate
\begin{equation}\label{negative}
 v^*U_\sigma v\le-\frac2{\sqrt{1-D}}
 \exp\!\left[-\frac{\pi da}{2(1-D)}-\frac{2\pi D}{da}\right]<0.
\end{equation}
For $d=3$, $\sigma=I/2$, and $u=e_1-e_{-1}$ this implies
$u^*Uu\le-8e^{-4\pi/3}/\sqrt3<-1/64$ and $u^*A_3u<-1/32$.
Indeed $\sqrt3<2$, $\pi<22/7$, $e<3$, and $4/243>1/64$ suffice.
For the thermal family $\sigma=\nu I$, the trace of $U$ is positive for every $\nu>0$ because all diagonal summands are positive. Its normalized matrix is therefore indefinite for $0<\nu<1$.

\section{Exponential approximation}\label{app-convergence}
\begin{proof}
Let $\mu$ be Gaussian of variance $\pi(D-1)/(ha)$, including a point mass when $D=1$.
Set $\psi_t(x)=(2/(ha))^{1/4}e^{-\pi x^2/(ha)+i\pi cx^2/(ha)+itx}$ and
$w_t(j)=\sqrt\eta\sum_r\psi_t(\eta(j+dr))$.
The characteristic function of $\mu$ and \eqref{bridge} give
$\varrho_\sigma=\int|\psi_t\rangle\langle\psi_t|d\mu$ and
$U=\int|w_t\rangle\langle w_t|d\mu$.
In $\mathcal H=L^2(\mu;L^2(\R))$, put $V(t)=\psi_t$, $Y(t)=J_{d,h}w_t$.
The exact marginals are
\[
 \mathbb E|\psi_t(x)|^2=\sqrt{\frac2{ha}}e^{-2\pi x^2/(ha)},\qquad
 \mathbb E|\widehat\psi_t(p)|^2=\sqrt{\frac2{hb}}e^{-2\pi p^2/(hb)},
\]
where $\widehat f(p)=h^{-1/2}\int f(x)e^{-2\pi ipx/h}dx$.
The second variance is the chirped-Gaussian variance $h(1+c^2)/(4\pi a)$ plus modulation variance $h(D-1)/(4\pi a)$, namely $hb/(4\pi)$.

Truncation to $E_d$ costs at most $e^{-\pi d/(4a)}$ in $\mathcal H$:
the squared tail is $\operatorname{erfc}(\sqrt{\pi d/(2a)})\le e^{-\pi d/(2a)}$.
Periodize $f_t(x)=\sum_r\psi_t(x+rL)$ on $E_d$.
For $r\ne0$ its translated block has norm at most
$(2d/a)^{1/4}e^{-\pi d(|r|-1/2)^2/a}$.
Minkowski and $(r-1/2)^2\ge1/4+2(r-1)$, $r\ge1$, give a cost $B_d(a)$.
The Fourier coefficients of $f_t$ are $c_k(t)=\sqrt\eta\,\widehat\psi_t(k\eta)$.
Poisson summation shows that $Y$ retains coefficients $\sum_m c_{k+md}$ for $|k|\le s$.
Their mean squared magnitudes satisfy
$\mathbb E|c_k|^2=\sqrt{2/(db)}e^{-2\pi k^2/(db)}$.
Each nonzero alias block has norm at most
$(2d/b)^{1/4}e^{-\pi d(|m|-1/2)^2/b}$; their sum is bounded by $B_d(b)$.
The omitted coefficients have total squared norm at most $r_d(b)^2$, since the first omitted positive index is $(d+1)/2$ and successive squares increase by at least $d+2$.
The retained aliases and omitted coefficients occupy orthogonal Fourier
subspaces, so their squared norms add. The normalized vector $Y/\norm Y$
has reduced state $J_{d,h}U J_{d,h}^*/\tr U$.
These estimates give
$\norm{Y-V}_{\mathcal H}\le\epsilon_d$.

Now $\norm V=1$, $\norm Y^2=\tr U>0$, and projection onto the complement of the line through $Y$ gives
$1-|\langle Y/\norm Y,V\rangle|^2\le\norm{Y-V}^2$.
Partial-trace contractivity and the pure-state trace-distance formula prove \eqref{tracebound}; the purification characterization of fidelity \cite{fuchs} gives the fidelity bound. The explicit tails give the rate.
\end{proof}
\section{Position and momentum basis limits}\label{app-basis}
\begin{proof}
Write $k(x,y)=K_{\sigma,d}(x,y)$ and $S=\sum_{x,y\in\Z}|k(x,y)|$.
Poisson summation and the eigenvalues $1/(da),D/(da)$ of its real quadratic matrix give
$L_0\le S\le L_0(1+Q_e)$.
The Gaussian majorant
\[ |k(x,y)|\le\sqrt{2/(da)}e^{-\pi(x^2+y^2)/(da)} \]
bounds the mass outside $I_d^2$ by $E_0$:
the one-dimensional full sum is at most $A_0$, and its tail from $(d+1)/2$ is at most $B_0$.
Let $M=\sum_{j,k}|U_{jk}|$. The triangle inequality, with central lifts separated from all other lifts, gives $S-2E_0\le M\le S$.
A second Poisson summation gives the exact normalizer
\[
 Z=\tr U=\sum_{m,k\in\Z}(-1)^{mk}
 e^{-\pi d(ak^2-2ckm+bm^2)/2}.
\]
Complete the square as $a(k-cm/a)^2+Dm^2/a$.
The shifted Gaussian sum in $k$ is at most the unshifted sum, by its Poisson expansion and $\cos\le1$.
Separating $(m,k)=(0,0)$ yields $|Z-1|\le Z_e$.
Divide $M$ by $2Z$ to obtain \eqref{scaling-bound}. Each error is a polynomial times a decaying exponential for fixed covariance. Fourier covariance, also valid discretely, proves the momentum assertion.
\end{proof}For completeness, the usual relative-entropy coherence of the finite matrices obeys
\begin{equation}\label{entropy-asymptotic}
 C_{{\rm rel},Q}({\tilde{\varrho}_{\sigma}})=\tfrac12\log d+\tfrac12\log(ea/2)
 -g\bigl((\sqrt D-1)/2\bigr)+o(1),
 \quad g(t)=(t+1)\log(t+1)-t\log t,
\end{equation}
with natural logarithms and $g(0)=0$. To justify the entropy limit, the normalized diagonal of $U/\tr U$ differs in total variation by an exponentially small quantity from the normalized truncated samples $e^{-2\pi j^2/(da)}$: the extra diagonal lifts are bounded by $E_0$ above.
Poisson summation, its derivative, and Gaussian second-moment tails give their entropy
$\frac12\log(da/2)+\frac12+o(1)$.
The finite-alphabet entropy continuity bound \cite{audenaert} makes the error tend to zero, since the total variation error times $\log d$ tends to zero.
For the state entropy, Gaussian symplectic diagonalization reduces the state to the thermal spectrum in \cite{Cotfas23}: it has geometric eigenvalues $(1-r)r^n$, $r=(\sqrt D-1)/(\sqrt D+1)$, so entropy $g((\sqrt D-1)/2)$.
Its normalized first-$d$ spectral truncation is within trace norm $2r^d$ (zero tail at $r=0$).
Together with \eqref{tracebound}, the two rank-at-most-$d$ states lie in a space of dimension at most $2d$ and have exponentially small trace distance.
The finite-dimensional entropy continuity bound \cite{audenaert} therefore makes their entropy difference $o(1)$; direct summation of the geometric spectrum gives the desired entropy limit.
Replacing $a$ by $b$ gives the momentum formula.

\section{Exact formulas and imaginarity}\label{app-theta}
\subsection{Theta formulas}
For a complex symmetric matrix with positive-definite real part write
$\vartheta_m(M)=\sum_{n\in\Z^m}e^{-\pi n^TMn}$.
Define
\begin{equation}\label{theta-data}
 H=\frac1{2da}\begin{pmatrix}1+D-2ic&1-D\\1-D&1+D+2ic\end{pmatrix},\quad
 G_{jk}=\sum_{r,t\in\Z}e^{-\pi(j+dr,k+dt)H(j+dr,k+dt)^T},\quad
 L_1=\begin{pmatrix}1&d\\1&0\end{pmatrix}.
\end{equation}
Then $U=\sqrt{2/(da)}G$, ${\tilde{\varrho}_{\sigma}}=G/T$, and
\begin{equation}\label{theta-trace}
 T=\tr G=\vartheta_2(L_1^THL_1)>0.
\end{equation}
Indeed the trace collects exactly the pairs $x\equiv y\pmod d$, uniquely $(x,y)=(n+dm,n)$.
Each entry is a shifted genus-two theta sum.

Put
\[
 M_- =\begin{pmatrix}H+\overline H&d\overline H\\d\overline H&d^2\overline H\end{pmatrix},\qquad
 M_+ =\begin{pmatrix}2H&dH\\dH&d^2H\end{pmatrix}.
\]
Their real parts are positive definite: their quadratic forms are the real parts of
$z^THz+(z+dt)^T\overline H(z+dt)$ and
$z^THz+(z+dt)^TH(z+dt)$, respectively.
Expanding two entries with the same residues parametrizes the lifts uniquely as $z,z+dt$, and proves
\begin{equation}\label{theta-measures}
 \tr{\tilde{\varrho}_{\sigma}}^2=\frac{\vartheta_4(M_-)}{T^2},\qquad
 J_{2,d}:=\tfrac14\norm{{\tilde{\varrho}_{\sigma}}-\overline{{\tilde{\varrho}_{\sigma}}}}_2^2
 =\frac{\vartheta_4(M_-)-\vartheta_4(M_+)}{2T^2}.
\end{equation}
Here $\vartheta_4(M_+)=\sum_{jk}G_{jk}^2=\tr(G\overline G)$.
$J_2$ is a Hilbert--Schmidt diagnostic; no resource monotonicity is claimed.
An independent purity expression from parity orthogonality is
\[\tr{\tilde{\varrho}_{\sigma}}^2=\frac{d\sum_{n,k}X_h(n,k)^2}{(\sum_{n,k}X_h(n,k))^2}.\]

The exact basis-coherence expressions are
\begin{equation}\label{theta-coherence}
 C_{Q,d}=\frac1{2T}\sum_{j,k\in I_d}|G_{jk}|,\qquad
 C_{Q,d}\big|_{c=0}=\frac{\vartheta_2(H)}{2\vartheta_2(L_1^THL_1)}.
\end{equation}
For $c=0$ all summands are positive, allowing the collapse to one theta quotient.
The finite Fourier transform $F_d$ obeys
$F_d{\tilde{\varrho}_{\sigma}}(a,b,c)F_d^*={\tilde{\varrho}_{\sigma}}(b,a,-c)$, as follows by rotating the discrete parity operators in \eqref{source}.
This also supplies $C_{P,d}$ explicitly.
For the discrete coherent frame, let $v_{nk}=\tilde D(n,k)v_{00}$ with
$(v_{00})_j$ proportional to $\sum_r e^{-\pi(j+dr)^2/d}$ and normalized to one.
The exact frame expression is
\[
 C_{{\rm frame},d}=\frac1{2d^2T}\sum_{n,k,m,\ell\in I_d}
 |v_{nk}^*Gv_{m\ell}|.
\]
It is a finite sum of theta moduli, rather than a single scalar theta value.

\subsection{Imaginarity}
Fix the antiunitary $(K\psi)(x)=\overline{\psi(x)}$ and write $\rho^*=K\rho K$.
It is also conjugation in the real Hermite basis. Equation \eqref{kernel} gives
$\varrho_{a,b,c}^*=\varrho_{a,b,-c}$ and $\rho=\rho^*\iff c=0$, in agreement with Xu \cite{xu}.
Momentum-basis conjugation is $RK$ in position space; since centered states commute with parity $R$, their position- and momentum-basis standard imaginarities coincide.
Let $f(\rho,\tau)=\tr\sqrt{\sqrt\rho\tau\sqrt\rho}$ and $F=f^2$.
The standard robustness, geometric and relative-entropy quantities are
\begin{equation}\label{imag-defs}
 I_R=\tfrac12\norm{\rho-\rho^*}_1,\qquad
 I_g=\tfrac12(1-f(\rho,\rho^*)),\qquad
 I_{\rm rel}=S((\rho+\rho^*)/2)-S(\rho).
\end{equation}
The finite-dimensional robustness identity is \cite{wu}, and the geometric identity is Theorem~3 of \cite{kondra}; we use squared fidelity $F$ consistently.
The robustness identity extends to trace class: if $A=\rho-\rho^*$, then
$KA_+K=A_-$ and adding $A_-$ to $\rho$ makes it real, with noise weight $\tr A_- =\norm A_1/2$; the triangle inequality gives the converse bound.
The geometric identity extends by real finite-rank truncations and fidelity continuity. Equivalently it is the continuous extension of the finite-dimensional geometric measure.
For relative entropy, the equality follows from
$\tr\rho\log\tau=\tr((\rho+\rho^*)/2)\log\tau$ for real $\tau$, and relative-entropy minimization.
All entropies here are finite for the Gaussian states and their two-state mixtures.

Specializing Xu's one-mode fidelity formula (Corollary~1 of \cite{xu}) gives
\begin{equation}\label{fidelity-gaussian}
 F_\sigma=\frac2{\sqrt{(D+1)^2+4c^2}-D+1},\qquad
 I_g=\frac{1-\sqrt{F_\sigma}}2,\qquad
 J_2=\frac12\left(\frac1{\sqrt D}-\frac1{\sqrt{D+c^2}}\right).
\end{equation}
Here $\sigma^K=S\sigma S$ with $S=\operatorname{diag}(1,-1)$, so that
$\sigma^K=\left(\begin{smallmatrix}a&-c\\-c&b\end{smallmatrix}\right)$ is the
covariance of the conjugate state; it is not entrywise conjugation of the real
matrix $\sigma$.
For the fidelity substitution,
$\det(\sigma+\sigma^K)=4ab$ and
$4ab+(D-1)^2=(D+1)^2+4c^2$.
For $J_2$, Gaussian integration gives $\tr\rho^2=1/\sqrt D$ and
$\tr(\rho\rho^*)=1/\sqrt{ab}$.
Thus these centered Gaussian quantities depend on $D$ and $|c|$, not $c$ alone at variable purity. Xu's fidelity measure $1-f$ is twice $I_g$.
For mixed states the useful rigorous bounds are
$1-\sqrt{F_\sigma}\le I_R\le\sqrt{1-F_\sigma}$ \cite{fuchs}.
For pure states, $D=1$, put $s_\infty=(1+c^2)^{-1/4}$. Then
\begin{equation}\label{pure-imag}
 I_R=\sqrt{1-s_\infty^2},\qquad I_g=(1-s_\infty)/2,\qquad
 I_{\rm rel}=h_2((1+s_\infty)/2),
\end{equation}
where $h_2(t)=-t\log t-(1-t)\log(1-t)$.
These follow from the overlap $|\langle\psi,\psi^*\rangle|=s_\infty$ and the two eigenvalues of the average of the two pure states.
For $c\ne0$, $(\rho+\rho^*)/2$ is generally not Gaussian: its symmetrically ordered $q^2p^2$ moment differs from the Gaussian with averaged covariance by $2(hc/(4\pi))^2$.
Its entropy therefore cannot be replaced by the entropy of that averaged covariance.

The discrete versions of \eqref{imag-defs} are exact finite-matrix expressions using \eqref{theta-data}.
At $D=1$, define
\[
 u_j=\sum_r e^{-\pi(1-ic)(j+dr)^2/(da)},\qquad
 s_d=\frac{|\sum_j u_j^2|}{\sum_j|u_j|^2}.
\]
Then $G=uu^*$ and all three formulas \eqref{pure-imag} hold with $s_\infty$ replaced by $s_d$; both sums are genus-two theta expressions.
An important finite-dimensional exception is exact:
with $a,D,d$ fixed, replacing $c$ by $c+2adk$, $k\in\Z$, and choosing $b=(D+c^2)/a$, leaves every lattice kernel summand unchanged.
Its phase multiplier is $e^{2\pi ik(x^2-y^2)}=1$ for integer $x,y$.
For example $d=3$ and $\sigma=\left(\begin{smallmatrix}1&6\\6&37\end{smallmatrix}\right)$ give exactly the vacuum discrete state, so discrete imaginarity is zero although $c=6$.
For fixed covariance this alias does not obstruct convergence:
\eqref{embedding} is real, hence $KJ_{d,h}=J_{d,h}K_d$ and
\[
 |I_R({\tilde{\varrho}_{\sigma}})-I_R(\varrho_\sigma)|\le 2\min(1,\epsilon_d),\qquad
 |\tr{\tilde{\varrho}_{\sigma}}^2-D^{-1/2}|\le4\min(1,\epsilon_d).
\]
Both follow from trace-norm contractivity and the triangle inequality.

Coherent frames require the same conjugation choice. Canonical coherent vectors obey $K|\alpha\rangle=|\bar\alpha\rangle$, so the conjugate state's kernel is
$\langle\alpha|\rho^*|\beta\rangle=
\overline{\langle\bar\alpha|\rho|\bar\beta\rangle}$.
Taking the raw imaginary part of $\langle\alpha|\rho|\beta\rangle$ does not measure standard imaginarity: the real thermal state has
\[
 \langle\alpha|\varrho_r|\beta\rangle=(1-r)
 e^{-(|\alpha|^2+|\beta|^2)/2+r\bar\alpha\beta},\quad 0<r<1,
\]
which is generally complex and changes under frame rephasing.
In the following integral, $z=(q,p)$ and $w=(q',p')$ are physical
phase-space labels, with $dz=dq\,dp$ and $dw=dq'\,dp'$.
A well-defined frame expression for the diagnostic is
$J_2=(4h^2)^{-1}\iint|\langle z|(\rho-\rho^*)|w\rangle|^2dzdw$;
the tight-frame resolution makes this exactly the Hilbert--Schmidt quantity.
The discrete expression has $h^2$ replaced by $d^2$.

\section{Continuous-frame coherence}\label{app-frame}
{\small

\noindent\textit{Proof of Theorem~\ref{continuousframe}.}

From (\ref{rhosigma}), we get \
$
\langle\alpha,\beta|\varrho_\sigma|\gamma,\delta\rangle=
2\int\limits_{\mathbb{R}^2}\mathcal{W}_{\!\varrho_\sigma} (q,p)\, \langle\alpha,\beta|\Pi(q,p)|\gamma,\delta\rangle dqdp. 
$\ \ 
By using  (\ref{coherentst}) and (\ref{Piqp}), we obtain
\[
\begin{array}{l}
\langle\alpha,\beta|\Pi(q,p)|\gamma,\delta\rangle=\sqrt{\frac{2}{h}}\ \ 
\mathrm{e}^{\frac{\pi \mathrm{i}}{h}(\alpha \beta-\gamma \delta)}
\int\limits_{-\infty}^\infty\mathrm{e}^{-\frac{2\pi \mathrm{i}}{h}\beta x}\ \ 
\mathrm{e}^{-\frac{\pi}{h}(x-\alpha)^2}\mathrm{e}^{\frac{4 \pi \mathrm{i}}{h}p(x-q)}\ \ 
\mathrm{e}^{\frac{2\pi \mathrm{i}}{h}\delta(2q-x)}\ \ \mathrm{e}^{-\frac{\pi}{h}(2q-x-\gamma)^2}dx
{}\\[4mm]
\qquad \qquad \quad   =\sqrt{\frac{2}{h}}\,
\mathrm{e}^{-\frac{\pi}{h}(\alpha^2+\gamma^2- \mathrm{i}\alpha \beta+\mathrm{i}\gamma \delta)} \ 
\mathrm{e}^{-\frac{4\pi}{h}(q^2+\mathrm{i}pq-\gamma q-\mathrm{i}\delta q)} \!\!\int\limits_{-\infty}^\infty\!\!\mathrm{e}^{-\frac{2\pi}{h}[x-\frac{1}{2}(\alpha -\mathrm{i}\beta -\gamma -\mathrm{i}\delta+2q+2\mathrm{i}p)]^2}dx \ 
\mathrm{e}^{\frac{\pi}{2h}(\alpha -\mathrm{i}\beta -\gamma -\mathrm{i}\delta+2q+2\mathrm{i}p)^2}\\[4mm]
\qquad \qquad \quad  =\sqrt{\frac{2}{h}}\,
\mathrm{e}^{-\frac{\pi}{h}(\alpha^2+\gamma^2- \mathrm{i}\alpha \beta+\mathrm{i}\gamma \delta)}\ \ 
\mathrm{e}^{-\frac{4\pi}{h}(q^2+\mathrm{i}pq-\gamma q-\mathrm{i}\delta q)}\  \sqrt{\frac{h}{2\pi}}\,\sqrt{\pi}\ 
\mathrm{e}^{\frac{\pi}{2h}(\alpha -\mathrm{i}\beta -\gamma -\mathrm{i}\delta+2q+2\mathrm{i}p)^2}\\[4mm]
\qquad \qquad \quad   =
\mathrm{e}^{-\frac{\pi}{2h}(\alpha+\gamma)^2}\ \ 
\mathrm{e}^{-\frac{\pi}{2h}(\beta+\delta)^2}\ \ 
\mathrm{e}^{-\frac{\pi\mathrm{i}}{h}(\alpha\delta-\beta\gamma)}
\ \ 
\mathrm{e}^{-\frac{2\pi}{h}[q^2-(\alpha-\mathrm{i}\beta +\gamma+\mathrm{i}\delta)q]}\ \ 
\mathrm{e}^{-\frac{2\pi}{h}[p^2-(\mathrm{i}\alpha+\beta -\mathrm{i}\gamma+\delta)p]}\\[4mm]
\end{array}
\]

In this appendix only, denote the entries of $\sigma^{-1}$ by $a,c,b$:\ $
\left(\!\!\!
\begin{array}{ll}
a & c\\
c & b
\end{array}\!\!\!\right)\!=\!
\left(\!\!\!
\begin{array}{ll}
\sigma_{11} & \!\sigma_{12}\\
\sigma_{12} & \!\sigma_{22}
\end{array}\!\!\!\right)^{\!\!-1}\!\!\!\!\!=\frac{1}{\mathrm{det}\, \sigma}
\left(\!\!\!
\begin{array}{rr}
\sigma_{22} & \!-\sigma_{12}\\
-\sigma_{12} & \!\sigma_{11}
\end{array}\!\!\!\right)
$ \  we obtain \  
$\mathcal{W}_{\!\varrho_\sigma} (q,p)\!=\!\frac{2}{h \sqrt{\det\sigma}}\, \mathrm{e}^{-\frac{2\pi}{h}(aq^2+2cqp+bp^2)}$\ \ and
\[
\begin{array}{l}
|\langle\alpha,\beta|\varrho_\sigma|\gamma,\delta\rangle|=
\left|2\int\limits_{\mathbb{R}^2}\mathcal{W}_{\!\varrho_\sigma} (q,p)\, \langle\alpha,\beta|\Pi(q,p)|\gamma,\delta\rangle dqdp\right|\\[2mm]
\qquad \qquad  =\frac{4}{h \sqrt{\det\sigma}}\, 
\mathrm{e}^{-\frac{\pi}{2h}(\alpha+\gamma)^2}\, 
\mathrm{e}^{-\frac{\pi}{2h}(\beta+\delta)^2}\, 
\left|\mathrm{e}^{-\frac{\pi\mathrm{i}}{h}(\alpha\delta-\beta\gamma)}
\int\limits_{\mathbb{R}^2}\mathrm{e}^{-\frac{2\pi}{h}[(a+1)q^2-(\alpha-\mathrm{i}\beta +\gamma+\mathrm{i}\delta-2cp)q]}dq\, 
\mathrm{e}^{-\frac{2\pi}{h}[(b+1)p^2-(\mathrm{i}\alpha+\beta -\mathrm{i}\gamma+\delta)p]}dp\right|\\[4mm]
\qquad \qquad  =\frac{4}{h \sqrt{\det\sigma}}\, 
\mathrm{e}^{-\frac{\pi}{2h}(\alpha+\gamma)^2}\, 
\mathrm{e}^{-\frac{\pi}{2h}(\beta+\delta)^2}\, 
\left|\int\limits_{\mathbb{R}^2}\mathrm{e}^{-\frac{2\pi}{h}[\sqrt{a+1}\, q-\frac{1}{2\sqrt{a+1}}(\alpha-\mathrm{i}\beta +\gamma+\mathrm{i}\delta-2cp)]^2}dq\right.\\
\qquad \qquad \qquad \qquad \qquad \qquad \qquad\qquad 
\qquad \qquad \qquad \quad \
\left.\mathrm{e}^{\frac{2\pi}{h}\frac{1}{4(a+1)}(\alpha-\mathrm{i}\beta +\gamma+\mathrm{i}\delta-2cp)^2}\ 
\mathrm{e}^{-\frac{2\pi}{h}[(b+1)p^2-(\mathrm{i}\alpha+\beta -\mathrm{i}\gamma+\delta)p]}dp\right|
\end{array}
\]
\[
\begin{array}{l}
\qquad \qquad  =\frac{4}{h \sqrt{\det\sigma}}\ 
\mathrm{e}^{-\frac{\pi}{2h}(\alpha+\gamma)^2}\ \ 
\mathrm{e}^{-\frac{\pi}{2h}(\beta+\delta)^2} \
\sqrt{\frac{h}{2\pi}}\frac{\sqrt{\pi}}{\sqrt{a+1}} \
\left| \mathrm{e}^{\frac{\pi}{2h}\frac{1}{a+1}(\alpha-\mathrm{i}\beta +\gamma+\mathrm{i}\delta)^2} \right|
\\[3mm]
\qquad \qquad \qquad \qquad \qquad \qquad 
\qquad \qquad \qquad \quad \
\left|\int\limits_{-\infty}^\infty
\mathrm{e}^{\frac{2\pi}{h}\frac{c^2}{a+1}p^2}\ \ 
\mathrm{e}^{-\frac{2\pi}{h}\frac{c}{a+1}(\alpha-\mathrm{i}\beta +\gamma+\mathrm{i}\delta)p}\ \ 
\mathrm{e}^{-\frac{2\pi}{h}[(b+1)p^2-(\mathrm{i}\alpha+\beta -\mathrm{i}\gamma+\delta)p]}dp\right|\\[2mm]
\qquad \qquad   =
2\sqrt{\frac{2}{a+1}}\frac{1}{\sqrt{h\,\det\sigma}}\, 
\mathrm{e}^{-\frac{\pi}{2h}(\alpha+\gamma)^2}
\mathrm{e}^{-\frac{\pi}{2h}(\beta+\delta)^2}
\mathrm{e}^{\frac{\pi}{2h}\frac{1}{a+1}(\alpha +\gamma)^2} 
\mathrm{e}^{-\frac{\pi}{2h}\frac{1}{a+1}(\beta -\delta)^2} 
\\
\qquad \qquad \qquad \qquad \qquad \qquad 
\qquad \qquad \qquad \quad \
\left|\int\limits_{-\infty}^\infty
\mathrm{e}^{-\frac{2\pi}{h}(b+1-\frac{c^2}{a+1})p^2+\frac{2\pi}{h}
[(\mathrm{i}\alpha+\beta -\mathrm{i}\gamma+\delta)-\frac{c}{a+1}(\alpha-\mathrm{i}\beta +\gamma+\mathrm{i}\delta)]p}
dp\right|\\[2mm]
\qquad \qquad   =
2\sqrt{\frac{2}{a+1}}\frac{1}{\sqrt{h\,\det\sigma}}\, 
\mathrm{e}^{-\frac{\pi}{2h}\frac{a}{a+1}(\alpha+\gamma)^2}
\mathrm{e}^{-\frac{\pi}{2h}(\beta+\delta)^2}
\mathrm{e}^{-\frac{\pi}{2h}\frac{1}{a+1}(\beta -\delta)^2} 
\\
\qquad \qquad \qquad \qquad\qquad \qquad\left|\int\limits_{-\infty}^\infty
\mathrm{e}^{-\frac{2\pi}{h}\left\{\sqrt{\frac{(a+1)(b+1)-c^2}{a+1}}p
-\frac{1}{2}\sqrt{\frac{a+1}{(a+1)(b+1)-c^2}}
\left[(\mathrm{i}\alpha+\beta -\mathrm{i}\gamma+\delta)-\frac{c}{a+1}(\alpha-\mathrm{i}\beta +\gamma+\mathrm{i}\delta)\right]\right\}^2}
dp\right|\\[4mm]
\qquad \qquad \qquad \qquad\qquad \qquad \qquad \qquad\qquad\qquad \qquad \qquad
\left|
\mathrm{e}^{\frac{\pi}{2h}\frac{a+1}{(a+1)(b+1)-c^2}
\left[(\mathrm{i}\alpha+\beta -\mathrm{i}\gamma+\delta)-\frac{c}{a+1}(\alpha-\mathrm{i}\beta +\gamma+\mathrm{i}\delta)\right]^2}
\right|\\[2mm]
\qquad \qquad   =
2\sqrt{\frac{2}{a+1}}\frac{1}{\sqrt{h\,\det\sigma}}\, 
\mathrm{e}^{-\frac{\pi}{2h}\frac{a}{a+1}(\alpha+\gamma)^2}
\mathrm{e}^{-\frac{\pi}{2h}(\beta+\delta)^2}
\mathrm{e}^{-\frac{\pi}{2h}\frac{1}{a+1}(\beta -\delta)^2} 
\sqrt{\frac{h}{2\pi}}\sqrt{\frac{a+1}{(a+1)(b+1)-c^2}} \, \sqrt{\pi}
\\[3mm]
\qquad \qquad \qquad \qquad\qquad \qquad \qquad \qquad \ \ 
\mathrm{e}^{-\frac{\pi}{2h}\frac{a+1}{(a+1)(b+1)-c^2}
\left[\alpha-\gamma+\frac{c}{a+1}(\beta-\delta)\right]^2} \
\mathrm{e}^{\frac{\pi}{2h}\frac{a+1}{(a+1)(b+1)-c^2}
\left[\beta +\delta-\frac{c}{a+1}(\alpha +\gamma)\right]^2}\\[2mm]
\qquad \qquad   =
\frac{2}{\sqrt{\det\sigma}}\frac{1}{\sqrt{(a+1)(b+1)-c^2}}\, 
\mathrm{e}^{-\frac{\pi}{2h}\frac{a}{a+1}(\alpha+\gamma)^2}
\mathrm{e}^{-\frac{\pi}{2h}(\beta+\delta)^2}
\mathrm{e}^{-\frac{\pi}{2h}\frac{1}{a+1}(\beta -\delta)^2} 
\\[2mm]
\qquad \qquad \qquad \qquad\qquad \qquad \qquad\qquad
\mathrm{e}^{-\frac{\pi}{2h} \frac{a+1}{(a+1)(b+1)-c^2}
\left[\alpha-\gamma+\frac{c}{a+1}(\beta-\delta)\right]^2} \
\mathrm{e}^{\frac{\pi}{2h} \frac{a+1}{(a+1)(b+1)-c^2}
\left[\beta +\delta-\frac{c}{a+1}(\alpha +\gamma)\right]^2}
\end{array}
\]

By denoting $\eta\!=\!(a\!+\!1)(b\!+\!1)-c^2$ and using the change of variables $x\!=\!\alpha\!+\!\gamma$, $y\!=\!\beta\!+\!\delta$,
$u\!=\!\alpha\!-\!\gamma$, $v\!=\!\beta\!-\!\delta$, we get
\[
\begin{array}{l}
\int\limits_{\mathbb{R}^4}|\langle \alpha,\beta|\varrho_\sigma|\gamma,\delta\rangle|d\alpha d\beta d\gamma d\delta =\\[3mm]
\qquad \qquad =\frac{1}{2\sqrt{\det\sigma}}\frac{1}{\sqrt{\eta}}\, 
\int\limits_{\mathbb{R}^4}\mathrm{e}^{-\frac{\pi}{2h}\frac{a}{a+1}x^2}
\mathrm{e}^{-\frac{\pi}{2h}y^2}
\mathrm{e}^{-\frac{\pi}{2h}\frac{1}{a+1}v^2} 
\mathrm{e}^{-\frac{\pi}{2h} \frac{a+1}{\eta}
\left[u+\frac{c}{a+1}v\right]^2} 
\mathrm{e}^{\frac{\pi}{2h} \frac{a+1}{\eta}
\left[y-\frac{c}{a+1}x\right]^2}dxdydudv\\[2mm]
\qquad \qquad =\frac{1}{2\sqrt{\det\sigma}}\frac{1}{\sqrt{\eta}}\, 
\int\limits_{\mathbb{R}^2}
\mathrm{e}^{-\frac{\pi}{2h} \frac{a+1}{\eta}
\left[u+\frac{c}{a+1}v\right]^2}\mathrm{e}^{-\frac{\pi}{2h}\frac{1}{a+1}v^2}  dudv\
\int\limits_{\mathbb{R}^2}\mathrm{e}^{\frac{\pi}{2h} \frac{a+1}{\eta}
\left[y-\frac{c}{a+1}x\right]^2}\mathrm{e}^{-\frac{\pi}{2h}\frac{a}{a+1}x^2}\mathrm{e}^{-\frac{\pi}{2h}y^2}dxdy\\[2mm]
\qquad \qquad =\frac{1}{2\sqrt{\det\sigma}}\frac{1}{\sqrt{\eta}}\, 
\sqrt{\frac{2h}{\pi}}\frac{\sqrt{\eta}}{\sqrt{a+1}} \sqrt{\pi}\ 
\sqrt{\frac{2h}{\pi}}\sqrt{a+1} \sqrt{\pi}\\[3mm]
\qquad \qquad \qquad \qquad\int\limits_{\mathbb{R}^2}\mathrm{e}^{-\frac{\pi}{2h} \left[\sqrt{\frac{ab+b-c^2}{\eta}}y+\frac{c}{\eta}
\sqrt{\frac{\eta}{ab+b-c^2}}x\right]^2}dy \ 
\mathrm{e}^{\frac{\pi}{2h}\frac{c^2}{\eta(a+1)}x^2}\mathrm{e}^{-\frac{\pi}{2h}\frac{a}{a+1}x^2}\mathrm{e}^{\frac{\pi}{2h}\frac{c^2}{\eta^2}\frac{\eta}{ab+b-c^2}x^2}dx\\[2mm]
\qquad \qquad =\frac{h}{\sqrt{\det\sigma}}
\int\limits_{-\infty}^\infty\mathrm{e}^{-\frac{\pi}{2h} \left[\sqrt{\frac{ab+b-c^2}{\eta}}y+\frac{c}{\eta}
\sqrt{\frac{\eta}{ab+b-c^2}}x\right]^2}dy \ 
\int\limits_{-\infty}^\infty
\mathrm{e}^{-\frac{\pi}{2h}\frac{ab-c^2}{ab+b-c^2}x^2}dx\\[2mm]
\qquad \qquad =\frac{h}{\sqrt{\det\sigma}}
\sqrt{\frac{2h}{\pi}}\frac{\sqrt{\eta}}{\sqrt{ab+b-c^2}} \sqrt{\pi}\ 
\int\limits_{-\infty}^\infty
\mathrm{e}^{-\frac{\pi}{2h}\frac{ab-c^2}{ab+b-c^2}x^2}dx\\[2mm]
\qquad \qquad =\frac{\sqrt{2}\,h\sqrt{h}}{\sqrt{\det\sigma}}
\frac{\sqrt{\eta}}{\sqrt{ab+b-c^2}}\ 
\int\limits_{-\infty}^\infty
\mathrm{e}^{-\frac{\pi}{2h}\frac{ab-c^2}{ab+b-c^2}x^2}dx=\frac{\sqrt{2}\,h\sqrt{h}}{\sqrt{\det\sigma}}
\frac{\sqrt{\eta}}{\sqrt{ab+b-c^2}}\ \sqrt{\frac{2h}{\pi}}\frac{\sqrt{ab+b-c^2}}{\sqrt{ab-c^2}} \sqrt{\pi}\ 
\\[2mm]
\qquad \qquad =\frac{2h^2}{\sqrt{\det\sigma}}
\frac{\sqrt{\eta}}{\sqrt{ab-c^2}}  
 =\frac{2h^2}{\sqrt{\det\sigma}}
\frac{\sqrt{ab+a+b+1-c^2}}{\sqrt{ab-c^2}}  
 =\frac{2h^2}{\sqrt{\det\sigma}}
\sqrt{ \det\sigma +\mathrm{tr}\, \sigma +1}.
\end{array}
\]

}
Multiplying this integral by $1/(2h^2)$ proves
Eq.~(\ref{coherencecsigma}). The volume elements satisfy
$d\alpha\,d\beta\,d\gamma\,d\delta=\tfrac14\,dx\,dy\,du\,dv$.

\end{document}